%===============================================================================
% ifacconf.tex 2022-02-11 jpuente  
% Template for IFAC meeting papers
% Copyright (c) 2022 International Federation of Automatic Control
%===============================================================================
\documentclass{ifacconf}

\usepackage{graphicx}      % include this line if your document contains figures
\usepackage{natbib}        % required for bibliography
\usepackage[dvipsnames]{xcolor}
\usepackage{pgfplots}
\usepackage{amsmath,amssymb,amsfonts,bbm}
\usepackage{enumerate}
\usepackage{amssymb}
\usepackage{mathrsfs}
\usepackage{tikz,pgfplots}
\usepackage{mathtools}
\usepackage{epstopdf}
\usepackage{balance}
\usepackage{subcaption}

\usepackage[deletedmarkup=sout,authormarkup=superscript]{changes}
%\usepackage[ruled,vlined]{algorithm2e}

%-----------------------------------------------
 %Statement Environments

\newtheorem{lemma}{Lemma}

\newtheorem{assumption}{Assumption}

\newenvironment{proof}[1][Proof:]{\begin{trivlist}
\item[\hskip \labelsep {\itshape #1}]}{\end{trivlist}}
%--------------------------------------------
% Macros

% Boldsymbols, mathcal, mathbb

\newcommand{\mc}{\mathcal}
\newcommand{\bb}{\mathbb}
\newcommand{\R}{\bb R}

\newcommand{\lam}{\lambda}

\newcommand{\eps}{\epsilon}

\newcommand{\limsupt}[0]{\overline{\lim_{t\to \infty}}}
\newcommand{\reals}[0]{\bb{R}}
\newcommand{\mto}[0]{\rightrightarrows}
%norm

% Operators and notations

% \newcommand{\argmin}{\operatorname{argmin}}

\newcommand{\Id}{\mathrm{Id}}

\newcommand{\argmin}[1]{\underset{#1}{\mathrm{argmin}}}

\newcommand\Kinf[0]{\mathcal{K}_\infty}
\newcommand\KL[0]{\mathcal{KL}}
\newcommand\K[0]{\mathcal{K}}
\renewcommand\L[0]{\mathcal{L}}

% colors

% comments
\definechangesauthor[name={Giuseppe}, color=NavyBlue]{GB}
\definechangesauthor[name={Dominic}, color=RedViolet]{DLM}
\definechangesauthor[name={Mathias}, color=OliveGreen]{MHDB}

%Nicolas' commands

 % heat pump index
% \newcommand{\norm}[1]{\left\lVert#1\right\rVert}
 % iteration index for iterative algorithms
 % secondary index for iterative algorithms
 % number of iterations
\newcommand{\agt}{i} % primary index for agents
 % secondary index for agents

%------------------------------------------------------------

%===============================================================================
\begin{document}
\begin{frontmatter}

\title{Stability and Robustness of Distributed Suboptimal Model Predictive Control\thanksref{footnoteinfo}} 
% Title, preferably not more than 10 words.

\thanks[footnoteinfo]{e-mails:\texttt{\{gbelgioioso, dliaomc, mbadyn, jlygeros, dorfler\}@ethz.ch}, \texttt{npelzmann@student.ethz.ch}. This research is supported by the Swiss National Science Foundation through NCCR Automation (Grant Number 180545)}

\author[First]{Giuseppe Belgioioso}
\author[First]{Dominic Liao-McPherson} 
\author[First]{Mathias Hudoba de Badyn}
\author[First]{Nicolas Pelzmann} 
\author[First]{John Lygeros}
\author[First]{Florian D\"{o}rfler}

\address[First]{ETH Zürich Automatic Control Laboratory, Physikstrasse 3, 8092 Zürich, Switzerland.}

\begin{abstract}                % Abstract of not more than 250 words.
%Distributed Model Predictive Control (MPC) is a modern control methodology for multi-agent systems that enjoys great success and widespread practical application due to its scalability properties and unique ability to enforce constraints in real time.
In distributed model predictive control (MPC), the control input at each sampling time is computed by solving a large-scale optimal control problem (OCP) over a finite horizon using distributed algorithms. Typically, such algorithms require several (virtually, infinite) communication rounds between the subsystems to converge, which is a major drawback both computationally and from an energetic perspective (for wireless systems). Motivated by these challenges, we propose a suboptimal distributed MPC scheme in which the total communication burden is distributed also in time, by maintaining a running solution estimate for the large-scale OCP and updating it at each sampling time.
%Thus, considerably reducing the total communication load of the MPC scheme.
We demonstrate that, under some regularity conditions, the resulting suboptimal MPC control law recovers the qualitative robust stability properties of optimal MPC, if the communication budget at each sampling time is  large enough.

%enough communication between subsystems are performed.
%
%optimization iterations are distributed (i) across the subsystems
%as well as (ii) in time, by maintaining a running solution estimate for the OCP and updating it at each sampling instant. Thus, considerably reducing the total communication burden of the MPC scheme. We prove stability and robustness of the proposed suboptimal distributed MPC by relying on convex optimization and systems-theoretic tools.
\end{abstract}

\begin{keyword}
Decentralized and distributed control, 	Optimization and control of large-scale network systems, Multi-agent systems, Control under communication constraints. 
\end{keyword}

\end{frontmatter}
%===============================================================================

\section{Introduction} \label{sec:Intro}
% Motivation
% MPC is important because
% Lots of distributed systems, MPC is attractive
There is growing interest in controlling networks of interacting dynamical systems, controlled by local decision makers (agents) connected by communication links. This interest is driven by a variety of applications including multi-vehicle platooning \citep{zheng2016distributed}, robotic formations \citep{luis2020online}, automatic generation control \citep{venkat2008distributed}, and road traffic control \citep{frejo2012global}. These systems are typically large-scale and can be challenging to control; they often possess pronounced dynamics, have limited communication capabilities, and are subject to system-wide constraints. 

Model predictive control (MPC) is a powerful control paradigm that computes actions by solving an optimal control problem (OCP) over a receding prediction horizon. MPC is an attractive control methodology for dynamic networked systems due to its unique ability to systematically optimize system performance and enforce state constraints. However, MPC can be challenging to implement in practice as it requires solving the underlying OCP online and in real-time. This challenge is exacerbated in a networked control setting where the OCPs are usually large-scale and information is distributed across many subsystems. This network setting necessitates the development of distributed MPC (DMPC) schemes tailored to the underlying system and communication network structure.

Distributed MPC (DMPC) schemes can be broadly categorized into simultaneous approaches, where subsystems cooperatively compute control actions using an iterative distributed optimization algorithm, sequential approaches, where subsystems compute their actions sequentially, and decentralized approaches, which eschew communication and treat other subsystems as disturbances \citep{muller2017economic,christofides2013distributed}. Sequential approaches typically have the strongest stability and constraint satisfaction guarantees but scale poorly (imagine 1000 agents taking turns to act) while simultaneous methods scale well but can require extensive communication. % could add some comments on advantages etc. of each space allowing

% Other approaches to distributed MPC
% sequential approaches
% simultenous approaches

In this paper, we focus on simultaneous DMPC approaches and study the impact of communication limits on closed-loop stability and robustness. Specifically, we consider the case where subsystems compute suboptimal control actions using distributed algorithms which would require several (virtually, infinite) communication rounds between subsystems to converge, %that requires communication between subsystems
 and investigate the properties of the closed-loop if only a fixed number of communication rounds are performed during each sampling period, e.g., due to the energy or time cost of communication, or due to bandwidth restrictions. Our main contribution is to prove that, under some regularity conditions, the resulting suboptimal DMPC control law recovers the qualitative robust stability properties of optimal MPC (which requires an infinite number of communication among the subsystems) if enough communication resources are available. % should mention that optimal DMPC requires infinite iterations so this is good

We consider a collection of linear time invariant (LTI) systems subject to coupling input and output inequality constraints and solve the underlying OCP with a semi-decentralized accelerated dual ascent (ADA) algorithm \citep{beck2009fast}. We approach the problem through the framework of Time-Distributed Optimization (TDO), where the plant and optimization algorithm are treated as a feedback interconnection of dynamical systems \citep{liao2020time}.
%The TDO approach assumes a fixed number of iterations/communication rounds, 
\cite{giselsson2013feasibility} derive an online stopping criterion for distributed dual ascent that guarantees closed-loop stability, in contrast we prove the existence of an a-priori bound on the number of iterations needed. To the best of our knowledge, this is the first work studying TDO using distributed dual algorithms, centralized TDO has been investigated for primal \citep{liao2021analysis,Leung2021AControl} and primal-dual \citep{skibik2022analysis} algorithms for LTI systems and sequential quadratic programming methods for nonlinear systems \citep{liao2020time,zanelli2021lyapunov}. Distributed TDO of decoupled LTI systems with coupled cost functions using a quantized primal gradient method is studied in \citep{yang2022real} in absence of coupling constraints. A detailed discussion of suboptimal MPC approaches can be found in \citep{liao2020time}, most methods for linear systems focus on primal gradient methods which are not well-suited for distributed problems with coupling constraints.

\subsubsection*{Notation:}
%\textit{Basic notation:}
$\bb R_{(\geq  0)}$ and $\bb N_{(\geq 0)}$ denote the sets of (nonnegative) real and (nonnegative) natural numbers, respectively. Given two matrices $A\in \bb R^{m \times n}$ and $B\in \bb R^{p \times q}$, $A \otimes B \in \bb R^{pm \times nq}$ denotes their Kronecker product; $\bb S^n_{++}$ ($\bb S^n_{+}$) denotes the set of positive (semi-)definite matrices. Given $x \in \bb R^n$ and $W \in \bb S^n_{++}$, the $W$-weighted norm of $x$ is $\|x\|_W=\sqrt{x^\top W x}$.
%
%\textit{Systems Theory:}
A function $\gamma:\reals_{\geq 0} \rightarrow \reals_{\geq 0} $ is of class $\mc K$ if it is continuous, strictly increasing, and satisfies $\gamma(0)=0$. If it is also unbounded, then $\gamma \in \Kinf$. Similarly, a function $\sigma: \reals_{\geq0} \to \reals_{\geq0}$ is said to be of class $\L$ if it is continuous, strictly decreasing, and satisfies $\sigma(s) \to 0$ as $s\to \infty$. A function $\beta:\reals_{\geq 0} \times \reals_{\geq 0} \to \reals_{\geq 0}$ is of class $\KL$ if $\beta(\cdot,s) \in \K$ for each fixed $s\geq 0$ and $\beta(r,\cdot)\in \L$ for fixed $r$.
%
%\textit{Operator Theory \citep{bauschke2011convex}:}
Given a closed convex set $\Omega \subseteq \reals^n$, $\mc I_{\Omega}:\reals^n \to \{0,\infty\}$ denotes its indicator function, $\mc N_{ \Omega}:\Omega \mto \reals^n$ is its normal cone operator, and $\Pi_{\Omega}:\reals^n \to \Omega$ is the Euclidean projection onto $\Omega$. $\text{Id}$ denotes the identity mapping. A set-valued mapping $\mc F:\reals^n \mto \reals^n$ is $\mu$-strongly monotone, if $(u-v)^\top (x-y) \geq \mu \left\| x-y \right\|^2$ for all $x\! \neq \!y \in \reals^n$, $u \in \mc F (x)$, $v \in \mc F (y)$, and monotone if $\mu\!=\! 0$; %$\fix(\mc F)=\{x \in \mathbb{R}^n \,|\, x \in \mc F(x)\}$ and $\zer(\mc F) = \{x \in \mathbb{R}^n \,|\, 0 \in \mc F(x)\}$ denote the set of fixed points and of zeros of $\mc F$.
For a convex function $f:\reals^n \to \reals$, $\partial f:\reals^n \mto \reals^n$ denotes the subdifferential mapping in the sense of convex analysis; and $f^*$ denotes the Fenchel conjugate of $f$.
%==================================================
%\clearpage
\section{Problem Setting} \label{sec:PS}
We consider a networked multi-agent system with $M$ agents, labeled by $i \in \mathcal M := \{1, \ldots, M \}$, each with decoupled dynamics of the form
\begin{equation}\label{eq:system_dynamics}
%(\forall i \in \mc M:)\quad
x_{t+1}^\agt = A^\agt x_t^\agt + B^\agt u_t^\agt + d_t^\agt, \quad %\forall k \in \bb N
\forall i \in \mc M
\end{equation}
where $x_t^\agt \in \mathbb{R}^{n^\agt}$ and $u_t^\agt \in \mathbb{R}^{m^\agt}$ are the state and control input at time $t\in \bb N$, respectively, and $d_t^i \in \mathcal{D}^\agt$ is a time-dependent disturbance within the compact set $\mathcal{D}^\agt$.
%\dlm{Lets keep $d$ as noise so we don't need to offset things}
%

We aim at stabilizing the LTI systems \eqref{eq:system_dynamics} at their origins while enforcing the following constraints for all $ t\geq 0$:
\begin{enumerate}[(i)]
\item Local state and control input constraints of the form%
\begin{subequations}%
\label{eq:ux_constraints}
\begin{align} \label{eq:u_constraints}
u^i_t &\in \mc U^\agt :=\{\nu \in \mathbb{R}^{m^\agt} ~|~ C_u^\agt \nu \leq c_u^\agt \},
&\; \forall i \in \mc M,
\\
x^i_t &\in \mc X^\agt :=\{\xi \in \mathbb{R}^{n^\agt} ~|~ C_x^\agt \xi \leq c_x^\agt \},
&\; \forall i \in \mc M.
\label{eq:x_constraints}
\end{align}
%\dlm{Shall we just use $Eu \leq c$ to avoid all the indices?}

\item Coupling inequality constraints of the form
%\begin{equation}
%%\textstyle
%    \sum_{j \in \mc M} E^{\agt j}  u^j_t \leq e^i,
%    \quad \forall i \in \mc M.
%\label{eq:coupling_constraints} 
%\end{equation}
\begin{equation}
%\textstyle
    \sum_{i \in \mc M} E_u^{i}  u^i_t + E_x^i x_t^i \leq \bar b.
\label{eq:coupling_constraints} 
\end{equation}
\end{subequations}
\end{enumerate}

To approach this problem using MPC, we use the following parametric optimal control problem (POCP)
\begin{subequations}\label{eq:general_eq_problem}
\begin{align}
\label{eq:globalCost}
%V(x) = 
\min_{\xi, \nu} &  \quad  \sum_{\agt\in \mc M}
\frac{1}{2} \left(
\| \xi^i_N\|^2_{P^i} + \sum_{k = 0}^{N-1}\|\xi_k^i \|^2_{Q^i}  + \| \nu^i_k\|^2_{R^i}
\right) 
\\[.3em]
\text{s.t.} \label{eq:constr1}
%&\; \forall i \in \mc M: \nonumber\\
%           & \;
%            \left\lfloor
%            \begin{array}{l r}
& \quad            \xi^i_{k+1} = A^i \xi^i_k + B^i  \nu^i_k,  \hspace*{2em} \forall k \in \mc H, \ \forall i \in \mc M\\[.2em]
\label{eq:constr2}
         & \quad           \xi_0^i = x^i, ~\xi_N^i \in \mc{X}_f^i \hspace*{7.5em} \forall i \in \mc M \\[.2em]
         \label{eq:constr3}
        & \quad           \nu^i_k \in \mc U^i, \ \xi_k^i \in \mc X^i, \hspace*{3.2em} \forall k \in \mc H, \ \forall i \in \mc M\\[.2em]
                & \quad     \sum_{i \in \mc M} E_u^{i}  \nu^i_k + E^{\agt }_x \xi^i_k \leq  \bar b, \hspace*{1.2em} \forall k \in \mc H, 
%        & \quad     \sum_{j \in \mc M} E_u^{\agt j}  \nu^j_k + E^{\agt j}_x \xi^i \leq  e^\agt, \hspace*{3.7em} \forall k \in \mc H, \ \forall i \in \mc M
%\end{array}
%\right.     
\label{eq:constrF}   
\end{align}
\end{subequations}
where $\mc H :=\{0,\ldots, N\!-\!1\}$ is the time horizon of length $N$, $\xi=(\xi^1,\ldots,\xi^M)$ and $\nu=(\nu^1,\ldots,\nu^M)$ are the stacked vectors of decision variables (states and control inputs), with $\xi^\agt=(\xi^\agt_0,\ldots,\xi^\agt_{N})$ and $\nu^\agt=(\nu^\agt_0,\ldots,\nu^\agt_{N-1})$, $Q^i \in \R^{n^\agt \times n^\agt}$, $R^i \in \R^{m^\agt \times m^\agt}$, $P^i \in \R^{n^\agt \times n^\agt}$ are weighting matrices, $x^\agt$ is the local parameter/measured state, and 
\begin{equation}
    \mc{X}_f^i = \{\xi \in \reals^{n_i}~|~C_N^i\xi \leq c_N^i\}
\end{equation}
is the terminal set.
Note that the global cost function \eqref{eq:globalCost} is the sum of the local stage and terminal costs.
%, i.e., $l^\agt(\xi^\agt, \nu^\agt) :=\sum_{k \in \mc H} \|\xi_k^i \|^2_{Q^i}  + \| \nu^i_k\|^2_{R^i}$, $l^\agt_N(\xi_N) := \| \xi^\agt_N\|^2_{P^\agt}$.

The following assumption ensures that the POCP in \eqref{eq:general_eq_problem} can be used to construct a stabilizing feedback law for \eqref{eq:system_dynamics}.
\begin{assumption}\label{ass:plant-mpc-properties}
For all $\agt \in \mathcal{M}$, the following hold:
\begin{enumerate}[(i)]
\item the pair $(A^i,B^i)$ is stabilizable;
\item the sets $\mc{X}^i$ and $\mc{U}^i$ contain the origin in their interior;
\item the weights satisfy $Q^i \in \mathbb{S}^{n^\agt }_{++}$, $R^i \in \mathbb{S}^{m^\agt }_{++}$;
\item there exists $K^i\in \reals^{n^i \times m^i}$ such that $\mc{X}_f^i$ is invariant and constraint admissible, and $P^i \in \mathbb{S}_{++}^{n^\agt }$ satisfies 
\begin{equation*}
    \|(A^i - B^iK^i)x^i\|_{P^i}^2 - \|x^i\|_{P^i}^2 \leq - \|x^i\|^2_{Q^i + (K^i)^T R^i K^i}
\end{equation*}
for all $x^i \in \reals^{n^i}$. {\hfill $\square$}
\end{enumerate}
\end{assumption}

The MPC feedback law is then the mapping
\begin{equation}\label{eq:kappa}
   u^i = \kappa^i(x) = \Xi^i \, S^p(x), \quad \forall i \in \mc M,
\end{equation}
where $S^p:\Gamma_N \to \reals^{N(n + m)}$ denotes the (primal) solution mapping\footnote{It follows by Assumption~\ref{ass:plant-mpc-properties} that $S^p$ is a single-valued mapping.} of \eqref{eq:general_eq_problem},  with $n = \sum_{i\in \mc M} n^i$ and $m = \sum_{i\in \mc M} m^i$, which is a function of the parameters $x = (x^1,\ldots,x^M)$, 
\begin{align}
    \Gamma_N &= \{x \in \reals^{n}~|~\eqref{eq:constr1} - \eqref{eq:constrF} \text{ are feasible} \, \} \nonumber \\
    & = \{x \in \reals^{n}~|~S^p(x) \neq \emptyset\}
\end{align}
is the set of feasible parameters,  and $\Xi^i$ is a matrix of appropriate dimensions that extracts the component $\nu_0^i$ from the stacked vector of decision variables $(\nu,\xi)$.
The resulting closed-loop system is
\begin{equation}
    x_{t+1}^i = A^i x_t^i + B^i \kappa^i(x_t^i) + d_t^i, \quad \forall i \in \mc M
\end{equation}
which can be written compactly in the following form:
\begin{equation}\label{eq:opt-closed-loop}
    x_{t+1} = A x_t + B \kappa(x_t) + d_t,
\end{equation}
where $x_t = (x^1_t,\ldots,x^M_t)$. It can be shown that, under Assumption~\ref{ass:plant-mpc-properties}, the system \eqref{eq:opt-closed-loop} is locally exponentially stable \cite[\S 5.6.3]{goodwin2006constrained}.

In practice, implementing the MPC feedback law \eqref{eq:kappa} requires solving the POCP in  \eqref{eq:general_eq_problem} at each sampling period. Centralized solution approaches are often not feasible due to insufficient computational resources (e.g., for large-scale networks) or privacy limitations. In these cases, the alternative is to use distributed optimization algorithms where agents share the total computational burden and must communicate to find a solution. Often, obtaining a precise solution of \eqref{eq:general_eq_problem} requires many rounds of communication which rapidly becomes the performance bottleneck.% and may result in dramatically increased computation and energetic costs (e.g., in wireless networks).

Instead, we propose to maintain a running estimate $\lambda_t$ of the solution of \eqref{eq:general_eq_problem} and improve it during each sampling period using a finite number of communication rounds. This leads to the coupled plant-optimizer interconnection%
\begin{subequations}%
\label{eq:closed-loop-system}
\begin{align}
\lam_t &= \mc T^\ell (\lam_{t-1},x_t), 
\\
x_{t+1} &= A x_t + B q(\lambda_t,x_t) + d_t,	
\end{align}
\end{subequations}
where $\lambda_t$ is an estimate of the dual solution of \eqref{eq:general_eq_problem}, $S^d(x_t)$, $\mc T^\ell$ represents the operator associated with $\ell$ iterations of a distributed optimization algorithm $\mc T$, and $q(\cdot,x)$ is the map between optimizer iterates $\lambda_t$ and the control input such that $\kappa(x) = q(\cdot,x) \circ S^d(x)$. In the next section, we give concrete definitions of $\mc{T}$, $\lambda$, $S^d$, and $q$ based on the dual ascent algorithm in \citep{giselsson2015metric}.
%

%---------------------------------------------------
\section{Distributed Optimization Strategy}
The OCP in \eqref{eq:general_eq_problem} can be recast in a condensed form by using \eqref{eq:constr1} to eliminate the state-associated variables $\xi$, yielding
%\begin{subequations}\label{eq:prob_cond}
%\begin{align}
%\label{eq:Cost_cond}
%\min_{v} &  \quad \sum_{\agt\in \mc I}
%f_i(v^\agt,x^i) %+ {\iota}_{\mathcal{V}^i}(v^\agt,x^i)
%\\[.3em]
%\text{s.t.}
%% \label{eq:con1_cond}
%& \quad    v^\agt \in \mathcal V^i(x^i),  \quad \forall i \in \mc I   \\
%%
%%\label{eq:con15_cond}
%%& \quad     z^i \in \mc Z^i(x^i), \qquad \forall i \in \mc I \\
%%
%\label{eq:con2_cond}
%& \quad E v \leq e, 
%%\;\;\, \qquad \forall i \in \mc I 
%%    \sum_{j \in \mc N_\agt}  \bar E ^{\agt j}  z^j =  \bar e^\agt,
%% \quad \forall i \in \mc I    
%\end{align}
%\end{subequations}
\begin{subequations} \label{eq:centralized-opt2}
\begin{alignat}{3}
&\min_{u}~~&&
\sum_{\agt\in \mc M}
f^i(u^\agt,x^i) \\
\label{eq:local-constraints-condensed2}
&~\mathrm{s.t.} && \qquad  
D^i x^i + C^i u^i  \leq c^i, &\quad  \forall i \in \mc M\\
&
 && \sum_{i \in \mc M} F^i x^i + E^i u^i \leq b, & 
%\quad \forall i \in \mc M
\label{eq:couplconstr2} 
\end{alignat}
\end{subequations}
where $x^i$ is the measured state, $f^i$ is a quadratic function%defined as
\begin{align}
\label{eq:dOptCost}
f^i(u^i,x^i) = \frac{1}{2}{\|(u^i,x^i) \|}^2_{M^i}, 
\quad
M^i = \begin{bmatrix}
H^\agt & G^\agt  \\ (G^\agt)^\top & W^\agt 
\end{bmatrix},
% \quad \forall \agt \in \mathcal{I}.
\end{align}
with $H^i \in \bb S^{Nm^i}_{++}$, $W^i \in \bb S^{n^i }_{++}$, $G^i \in \bb R^{N m^i \times n^i}$ depending on the matrices of the local dynamics and cost in \eqref{eq:globalCost} and \eqref{eq:constr1}; $F^i \in\bb R^{N p\times n^i}, E^i \in \bb R^{Np \times N m^i  }$ and $b \in \bb R^{Np}$ model in compact form the coupling constraints \eqref{eq:constrF}; and
$D^i \in \bb R^{N(m^i\!+n^i)\times  n^i} , C^i \in \bb R^{N(m^i\!+n^i)\times N m^i} $ and $c^i \in \bb R^{N(m^i\!+n^i)}$ depend on the matrices of the local dynamics and state and input constraints \eqref{eq:constr1}--\eqref{eq:constr3}.  All the matrices in \eqref{eq:centralized-opt2} are formally defined in Appendix~\ref{ap:POCPM}.

To solve \eqref{eq:centralized-opt2} in a scalable manner, we use a regularized and semi-decentralized version of the accelerated dual ascent (ADA) in \citep{giselsson2015metric}, which is summarized in Algorithm~1. At each iteration $j$, the agents update their local estimate of the optimal control input trajectory ($u^i$) by solving a quadratic program depending on the local stage cost, local input and output constraints, and the dual variable ($\lambda$) of the common resource constraints \eqref{eq:couplconstr2}. Then, a coordinator gathers the aggregate quantity $\sum_{i \in \mc M} F^i x^i + E^i u^i$ in \eqref{eq:couplconstr2}, and updates and broadcasts the dual variable ($\lambda$) to all agents.

The algorithm is scalable in the sense that the computational complexity of the local and central updates do not depend on the total number of agents. However, it requires several communication rounds between agents and central coordinator to achieve convergence. 
In the next section, we discuss the algorithm derivation and its convergence rate.

\begin{figure}[t] \label{algo:dual-ascent}
\hrule
\smallskip
\textsc{Algorithm $1$}: Semi-decentralized ADA
\smallskip
\hrule
\medskip
\textbf{Input}: $\lambda \in \bb R^{Np}$, $x \in \Gamma_N$, $ \ell\in \bb N_{>0}$. \\[.5em]
\textbf{Initialization ($\boldsymbol{j=0}$)}: $\lambda_0=\lambda$, $\mu_0 = \lambda_0$, $\theta_0 = 1$.

\medskip
\noindent
%\textbf{For $k = 0,1,\ldots, \ell$}:
\textbf{While $\boldsymbol{j \leq  \ell-1}$, do:}\\[.5em]
%\textbf{Iterate until convergence:}\\[.5em]
\hspace*{.5em}$
\left\lfloor
\begin{array}{l l}
%\begin{enumerate}[1.]
%\item
 & \hspace*{-1em}
\text{For all  $i \in \mc M$:  Control trajectory update}\\[.5em] 
&
\hspace*{-1em}
\left\lfloor
\begin{array}{l}
u^i_{j} = \left\{
\begin{array}{r l}
\arg\min_{\xi} & \; f^i(\xi,x^i) +
% {((E^i)^\top \lambda_k)}^\top \xi\\
 \lambda_j^\top E^i \, \xi\\
\text{s.t.} & \; D^i x^i + C^i \xi  \leq c^i
\end{array}
\right.\\[1.5em]
F^ix^i+ E^i u^i_{j} \longrightarrow \text{ coordinator (communication)}  \\[-.5em]
\hspace*{1em}
\end{array}
\right.
\\[2em]
\hspace*{1em}\\

%\item
 &
 \hspace*{-1em} \text{Coordinator: Gather \& broadcast }\\[.5em]
%\hspace*{1em}
&
\hspace*{-1em} \textstyle
\left\lfloor
\begin{array}{l}
\mu_{j+1} = 
\Pi_{\bb R_{\geq 0}^p}
\left[ \lambda_j + \alpha \left(  \sum_{i \in \mc M}\hspace*{-.2em} F^ix^i \!+\! E^i u_{j}^i \!-b\! -\! \epsilon\lambda_j \right) \right]\\[1.5em]
\theta_{j +1} = \frac{1}{2}\left(1+\sqrt{1+4\theta_j^2} \right)\\[1em]
\lambda_{j+1} = 
\mu_{j+1} + \left( \frac{\theta_j-1}{\theta_{j+1}}\right) (\mu_{j+1}- \mu_j)
\\[1em]
\lambda_{j+1} \longrightarrow \text{ agents (communication)}  \\[-.5em]
\hspace*{1em}
\end{array}
\right.
\\[-.5em]
%\end{enumerate}
\hspace*{1em}\\
& \hspace*{-1em} j \leftarrow j+1
\end{array}
\right.
$\\[.5em]

\textbf{Output}: $\lambda_{ \ell}$

\bigskip
\hrule
%\end{minipage}}
\end{figure}

\subsection{Algorithm Derivation and Convergence Analysis}
First, we reformulate \eqref{eq:centralized-opt2} as the composite problem
\begin{alignat}{3}
&\min_{u, \,y}~~&&
\sum_{\agt\in \mc M}
h^i(u^\agt,x^i) + \mc I_{\mathcal E(x)}\left(\sum_{i \in \mc M}E^i u^i \right)
\label{eq:eqC}
\end{alignat}
where the functions $h^i(u^i,x^i) := f^i(u^i,x^i) + \mc I_{\mathcal Z^i(x^i)}(u_i)$ collect the local stage cost $f^i$ and the indicator function $\mc I_{\mathcal Z^i(x^i)}$ of the set of local input and state constraints \eqref{eq:local-constraints-condensed2}
\begin{align}
\label{eq:PCS}
\mathcal{Z}^i(x^i) = \left\{v \in \bb R^{m^i} ~|~ 
D^i x^i + C^i v \leq c^i
\right\};
\end{align} 
and $\mc I_{\mathcal E(x)}$ is the indicator function of the parametric set 
\begin{align}\textstyle
\mathcal E(x) = \left\{
y~|~ \sum_{i \in \mc M}F^i x^i + y \leq b \right\}.
\end{align}

Solving \eqref{eq:eqC} in a distributed way is problematic due to the second coupling term of the cost function. Therefore, we introduce the dual variables $\lambda$ and define the correspondent dual problem \citep[\S~15.3]{bauschke2011convex}
\begin{align}
\label{eq:DualP}
\min_{\lambda} \sum_{i \in \mc N} {(h^i(\cdot,x^i))}^* \circ (-(E^i)^\top \lambda) +  \mc I^*_{\mathcal E(x)}(\lambda),
\end{align}
where $(h^i(\mu, x^i))^* = \sup_\xi \{
\left\langle  \mu, \; \xi \right\rangle - h^i(\xi,x^i)
\}$ is the Fenchel conjugate of $h^i(\cdot, x^i)$ and $ \mc I^\star_{\mathcal E(x)}(\mu) = \sup_{\xi \in \mc E(x)} 
\left\langle  \mu, \; \xi \right\rangle $ is the Fenchel conjugate of $\mc I_{\mathcal E(x)}$ (known as support function).

The dual formulation \eqref{eq:DualP} is more amenable to decentralized solutions, however, it is not strongly convex, which is necessary to prove robust convergence of iterative solution schemes. Thus, to improve conditioning, we introduce a regularization term, yielding
\begin{align}
\label{eq:DualP_reg}
\min_{\lambda} \; \overbrace{\sum_{i \in \mc N} {(h^i(\cdot,x^i))}^*\circ(-(E^i)^\top \lambda) +  \epsilon\|\lambda\|^2 }^{=:\varphi_\epsilon(\lambda,x)} + \; \mc I^*_{\mathcal E(x)}(\lambda),
\end{align}%
where $\epsilon>0$ is the regularization parameter and $\varphi_\epsilon(\cdot,x)$ is a short notation for the first two terms of the dual cost.

The regularization ensures that the dual solution mapping $S^d_\eps:\Gamma_N \to \reals^p_{\geq0}$ of \eqref{eq:DualP_reg} is single-valued and Lipschitz continuous with respect to $x$.
%Remarkably, the closed-loop remains stable for small enough $\eps$, as shown in Theorem~\ref{thm:plant-iss}.

The next lemma proves that the regularized cost \eqref{eq:DualP_reg}, is Lipschitz continuous and strongly convex, uniformly in $x$.% uniformly for any feasible measured state $x$.
\begin{lemma} \label{lmm:dual-props}
Let $\psi_\epsilon(\cdot, x)=\varphi_\epsilon(\cdot, x) +  \mc I^*_{\mathcal E(x)}(\cdot) $ be the dual cost function in \eqref{eq:DualP_reg}. For all $\epsilon>0$ and $x\in \Gamma_N$, 
\begin{enumerate}[(i)]
\item $\psi_\epsilon(\cdot,x)$ is $\epsilon$-strongly convex;
\item $\nabla \varphi_\epsilon(\cdot,x)$ is Lipschitz continuous, with constant
\begin{align}
\label{eq:LipConst}
\textstyle
L_{\varphi_\epsilon}=\epsilon + \sqrt{\sum_{i \in \mc M} \|E^i(H^i)^{-1}(E^i)^\top \|^2}.
\end{align}

\end{enumerate}
\end{lemma}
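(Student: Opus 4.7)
The plan is to prove the two parts separately using standard convex analysis tools. For part (i), I would write $\psi_\epsilon(\cdot,x) = \varphi_\epsilon(\cdot,x) + \mc I^*_{\mc E(x)}(\cdot)$ and further decompose $\varphi_\epsilon(\cdot,x) = \sum_{i\in\mc M}(h^i(\cdot,x^i))^*\circ(-(E^i)^\top \cdot) + \epsilon\|\cdot\|^2$. The first summand is convex as a sum of compositions of convex (Fenchel-conjugate) functions with linear maps, and $\mc I^*_{\mc E(x)}$ is convex as the support function of a convex set. The only contribution to strong convexity is the quadratic $\epsilon\|\cdot\|^2$, and since adding convex functions to a strongly convex function preserves the strong-convexity modulus, the $\epsilon$-strong convexity of $\epsilon\|\cdot\|^2$ (in the convention used by the paper) transfers to $\psi_\epsilon(\cdot,x)$.

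For part (ii), the core observation is that, since $H^i\succ 0$, the function $h^i(\cdot,x^i)=f^i(\cdot,x^i)+\mc I_{\mc Z^i(x^i)}$ is strongly convex, so its Fenchel conjugate $(h^i(\cdot,x^i))^*$ is continuously differentiable with Lipschitz gradient (Baillon--Haddad, or a direct argument using the optimality conditions of the inner argmin). Explicitly, $\nabla(h^i(\cdot,x^i))^*(y)=\arg\min_u\{h^i(u,x^i)-y^\top u\}=:u^i(y,x^i)$ is the unique minimizer of a strongly convex parametric QP. The chain rule applied to $\phi_i(\lambda):=(h^i(\cdot,x^i))^*(-(E^i)^\top\lambda)$ gives $\nabla\phi_i(\lambda)=-E^i u^i(-(E^i)^\top\lambda,x^i)$, and I would exploit that $u^i(\cdot,x^i)$ is piecewise affine in $y$, with slope governed by the inverse of the reduced Hessian derived from $H^i$ on each active-set region, to obtain the tight agent-wise Lipschitz constant $\|E^i(H^i)^{-1}(E^i)^\top\|$.

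The aggregation step then combines these agent-wise bounds. Stacking the local increments $\Delta u^i:=u^i(-(E^i)^\top\lambda_1,x^i)-u^i(-(E^i)^\top\lambda_2,x^i)$ into a vector $\Delta U=(\Delta u^1,\ldots,\Delta u^M)$ and using that the local QPs decouple given $\lambda$, I would bound $\|\Delta U\|^2=\sum_i\|\Delta u^i\|^2$ via the agent-wise Lipschitz estimates and then control $\|\sum_i E^i\Delta u^i\|$ by a Cauchy--Schwarz / operator-norm argument involving the block-row matrix $[E^1\;\cdots\;E^M]$, producing the $\sqrt{\sum_{i\in\mc M}\|E^i(H^i)^{-1}(E^i)^\top\|^2}$ factor. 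Adding the Lipschitz-gradient contribution of $\epsilon\|\lambda\|^2$ (which enters as $\epsilon$ under the convention consistent with part (i)) yields the stated $L_{\varphi_\epsilon}$.

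The main obstacle is twofold. First, securing the tight agent-wise Lipschitz constant $\|E^i(H^i)^{-1}(E^i)^\top\|$ --- rather than the looser $\|E^i\|^2\|(H^i)^{-1}\|$ that a naive chain-rule-plus-Baillon--Haddad estimate would yield --- in the presence of the polyhedral local constraints $\mc Z^i(x^i)$: this requires exploiting the quadratic-plus-indicator structure of $h^i$ so that the active-set reduction preserves the $(H^i)^{-1}$ curvature, e.g.\ by recognizing $u^i(y,x^i)$ as a projection onto $\mc Z^i(x^i)$ in the $H^i$-weighted inner product. Second, the aggregation into the square-root-of-sum-of-squares form is a nontrivial manipulation; a naive triangle-inequality bound would only yield $\sum_{i\in\mc M}\|E^i(H^i)^{-1}(E^i)^\top\|$, so the precise form of the claim requires a careful Cauchy--Schwarz step exploiting the block-row structure of $[E^1\;\cdots\;E^M]$. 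I expect most of the effort to go into these two points.
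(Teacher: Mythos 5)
Your part (i) coincides with the paper's own argument (convexity of each conjugate term composed with a linear map, convexity of the support function $\mc I^*_{\mathcal E(x)}$, and the strong convexity of the regularizer surviving the sum), so nothing to add there. For part (ii) the paper gives no derivation at all --- it simply declares the constant to be a specialization of the smoothness bound in \citep[Eq.~(9)]{giselsson2015metric} --- so your from-scratch route is genuinely different. Your per-agent step is sound: since $h^i(\cdot,x^i)$ is a quadratic with Hessian $H^i\succ0$ plus the indicator of $\mc Z^i(x^i)$, its conjugate is differentiable, $\nabla(h^i(\cdot,x^i))^*$ is the $H^i$-weighted projection you describe, and the map $\lambda\mapsto E^i\,\nabla(h^i(\cdot,x^i))^*(-(E^i)^\top\lambda)$ is indeed Lipschitz with the tight constant $\|E^i(H^i)^{-1}(E^i)^\top\|$.

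The genuine gap is the aggregation step. Carrying out the Cauchy--Schwarz argument you sketch (pair $v^\top E^i\Delta u^i$ as $\langle (H^i)^{-1/2}(E^i)^\top v,\,(H^i)^{1/2}\Delta u^i\rangle$, sum over $i$, Cauchy--Schwarz over agents) delivers the constant $\bigl\|\sum_{i\in\mc M}E^i(H^i)^{-1}(E^i)^\top\bigr\|=\|EH^{-1}E^\top\|$ with $E=[E^1\ \cdots\ E^M]$ and $H=\blkdiag(H^1,\ldots,H^M)$, not the square-root-of-sum-of-squares in \eqref{eq:LipConst}; a cruder triangle inequality gives $\sum_i\|E^i(H^i)^{-1}(E^i)^\top\|$. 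The form $\sqrt{\sum_i\|E^i(H^i)^{-1}(E^i)^\top\|^2}$ cannot be reached from the per-agent bounds by any refinement of this step, because it can be strictly smaller than the true Lipschitz constant: take a shared budget constraint with identical scalar blocks $E^i=1$, $H^i=1$ and inactive local constraints, so $\nabla\varphi_0(\lambda)=M\lambda$ has Lipschitz constant $M$, whereas the claimed expression evaluates to $\sqrt{M}$. So your plan, completed honestly, proves the lemma with $\epsilon$ replaced by $2\epsilon$ and the coupling term replaced by $\|\sum_i E^i(H^i)^{-1}(E^i)^\top\|$ --- which is exactly what the matrix bound of \citep{giselsson2015metric} specializes to --- and the discrepancy with \eqref{eq:LipConst} is an issue with the paper's citation-based statement rather than something a sharper Cauchy--Schwarz can repair. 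A second, smaller point: the gradient of $\epsilon\|\lambda\|^2$ is $2\epsilon$-Lipschitz; the ``convention'' that lets you understate the strong-convexity modulus in (i) does not let you understate a Lipschitz constant in (ii), so the additive term should be $2\epsilon$ unless the regularizer is read as $\tfrac{\epsilon}{2}\|\lambda\|^2$.
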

\begin{proof}
(i) The first term in \eqref{eq:DualP_reg}, $(h^i(\cdot,x^i))^*\circ (-E^i)^\top$, is convex since it is the composition of a convex term $(h^i(\cdot,x^i))^*$ with a linear term $(-E^i)^\top$. Convexity of $(h^i(\cdot,x^i))^*$ follows from \citep[Cor.~13.38]{bauschke2011convex} since $h^i(\cdot,x^i)=f^i(\cdot,x^i)+ \mc I_{\mc Z(x^i)}$ is convex for all $x^i$ such that  $\mc Z(x^i)$ in \eqref{eq:PCS} is nonempty. Similarly, the second term in \eqref{eq:DualP_reg}, $\mc I^*_{\mathcal E(x)}$, is convex since $\mc I_{\mathcal E(x)}$ is the indicator function of a convex set, thus convex.  Finally, the regularization term $\epsilon\|\cdot\|^2$ is $\epsilon$-strongly convex for $\epsilon>0$. The sum of the three terms remains $\epsilon$-strongly convex. (ii) This Lipschitz constant is a specialization of the bound in \citep[Eq.n~(9)]{giselsson2015metric} to our setup. 
{\hfill $\square$}
\end{proof}

Finally, Algorithm~1 is obtained by solving the regularized dual problem \eqref{eq:DualP_reg} via the accelerated forward-backward algorithm proposed in \citep{beck2009fast}, yielding%
\begin{subequations}%
\label{DualFBSCompact}
\begin{align}
\mu_{j +1} &= (\Id + \alpha \mc I^*_{\mathcal E(x)})^{-1}(\lambda_j - \alpha \nabla \varphi_\epsilon(\lambda_j,x) ),\\
\lambda_{j +1} &= \mu_{j +1} + \zeta_j(\mu_{j +1} -\mu_{j }),
\end{align}
\end{subequations}
where $\alpha$ is a constant step size and $\{\zeta_j\}_{j \in \bb N}$ is an acceleration sequence. See \citep{giselsson2015metric} for a complete derivation of Algorithm~1 from \eqref{DualFBSCompact}. 

A single iteration of Algorithm~1 is represented by 
\begin{align}
\lambda_{j+1} = \mc T(\lambda_{j},x)
\end{align}
where $\mc T: \bb R^{Np}\times \bb R^{n}\rightarrow \bb R^{Np}$ is the algorithm update rule. When running the algorithm for multiple ($\ell$) iterations, we have the following
recursive definition for $\mc T^\ell$
\begin{align}
\mc T^\ell(\lambda,x) = \mc T (\mc T^{\ell-1}(\lambda,x),x),
\end{align}
where $\lambda \in \bb R^p$ is the dual solution estimate, $x$ is the input parameter, and $\mc T^0(\lambda,x) = \lambda$ is the initialization. A bound on the convergence rate of Algorithm~1 is given next.
\begin{thm}\cite[Theorem~4.4]{beck2009fast}.
\label{thm:algo-conv}
Let Assumption \ref{ass:plant-mpc-properties} hold.
%Let $\mc T^\ell(\cdot,x)$ represent $\ell$ iterations of Algorithm~1 and let $S^d_\epsilon(x)$ be the solution to the regularized dual problem \eqref{eq:DualP_reg}. 
Then, for any $x\in \Gamma_N$, $\lambda \in \bb R^{Np}$, and $\alpha \in (0, \, 1/L_{\varphi_\epsilon})$, it holds that%
\begin{align}
\psi_\epsilon\left(\mc T^\ell(\lambda,x),x \right) - \psi_\epsilon\left(S^d_\epsilon(x),x\right) \leq \frac{2{\|\lambda - S^d_\epsilon(x)\|}^2}{\alpha (\ell+1)^2},
\end{align}
where $\psi_\epsilon(\cdot, x)$ is the dual cost function in \eqref{eq:DualP_reg}.
{\hfill $\square$}
\end{thm}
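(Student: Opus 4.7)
The plan is to verify that the hypotheses of the accelerated proximal gradient scheme of \citep{beck2009fast} are met and then apply its classical $O(1/\ell^2)$ rate. Writing the regularized dual cost in composite form $\psi_\epsilon(\cdot,x) = \varphi_\epsilon(\cdot,x) + \mc I^*_{\mathcal E(x)}(\cdot)$, Lemma~\ref{lmm:dual-props} supplies the two structural facts required: convexity of both summands, and $L_{\varphi_\epsilon}$-Lipschitz continuity of $\nabla\varphi_\epsilon(\cdot,x)$. Moreover, $\mc I^*_{\mathcal E(x)}$ is proper and lower semicontinuous with a closed-form proximal map, namely the projection--shift step executed by the coordinator in Algorithm~1. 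Under these conditions the iteration \eqref{DualFBSCompact} is literally FISTA with constant step size $\alpha$ and momentum weights $\zeta_j = (\theta_j-1)/\theta_{j+1}$ driven by $\theta_{j+1} = \tfrac{1}{2}(1+\sqrt{1+4\theta_j^2})$, so the theorem follows directly.

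To reconstruct the argument, the first step is a descent lemma obtained by combining the $L_{\varphi_\epsilon}$-Lipschitz quadratic upper bound for $\varphi_\epsilon(\cdot,x)$ with the first-order optimality condition of the proximal step defining $\mu_{j+1}$: for any reference point $y$ and $\alpha \in (0,1/L_{\varphi_\epsilon}]$,
\begin{equation*}
\psi_\epsilon(\mu_{j+1},x) - \psi_\epsilon(y,x) \leq \tfrac{1}{2\alpha}\bigl(\|y-\lambda_j\|^2 - \|y-\mu_{j+1}\|^2\bigr).
\end{equation*}
Applying this twice, with $y = S^d_\epsilon(x)$ and with $y = \mu_j$, and then taking the convex combination with weight $1 - 1/\theta_{j+1}$, yields a recursion on the Lyapunov-like quantity
\begin{equation*}
v_j := \alpha \theta_j^2 \bigl(\psi_\epsilon(\mu_j,x) - \psi_\epsilon(S^d_\epsilon(x),x)\bigr) + \tfrac{1}{2}\|\theta_j \mu_j - (\theta_j-1)\mu_{j-1} - S^d_\epsilon(x)\|^2,
\end{equation*}
and the defining identity $\theta_{j+1}^2 - \theta_{j+1} = \theta_j^2$ makes the cross terms cancel, so that $v_{j+1} \leq v_j$. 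An easy induction on the $\theta$-update gives $\theta_j \geq (j+2)/2$, which combined with $v_\ell \leq v_0 = \tfrac{1}{2}\|\lambda - S^d_\epsilon(x)\|^2$ delivers the advertised bound (up to standard indexing conventions separating the proximal iterate $\mu_\ell$ from the extrapolation $\lambda_\ell$).

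The main and essentially only non-routine obstacle is the algebraic bookkeeping that forces the cross terms to cancel in the Lyapunov recursion: the combination coefficient of the two descent inequalities must be exactly $(\theta_{j+1}-1)/\theta_{j+1}$, and the $\theta$-update is chosen precisely so that $\theta_{j+1}^2 - \theta_{j+1} = \theta_j^2$. Everything else---convexity of the Fenchel conjugates, smoothness of the regularized part, and single-valuedness of $S^d_\epsilon$---is already available from the setup and from Lemma~\ref{lmm:dual-props}. The proof therefore reduces to invoking \citep[Thm.~4.4]{beck2009fast} after verifying the splitting, with the only paper-specific remark being that the Lipschitz constant there coincides with $L_{\varphi_\epsilon}$ in \eqref{eq:LipConst}.
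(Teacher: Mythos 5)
Your proposal is correct and follows essentially the same route as the paper, which proves this statement simply by citing \cite[Theorem~4.4]{beck2009fast}: you verify that $\psi_\epsilon(\cdot,x)=\varphi_\epsilon(\cdot,x)+\mc I^*_{\mathcal E(x)}(\cdot)$ satisfies the FISTA hypotheses (convex summands, $L_{\varphi_\epsilon}$-Lipschitz gradient of the smooth part via Lemma~\ref{lmm:dual-props}, computable prox of the support function) and then invoke, and correctly sketch, the standard accelerated-gradient rate argument. Your parenthetical remark distinguishing the proximal iterate $\mu_\ell$ from the extrapolated iterate $\lambda_\ell$ is a fair observation about an indexing convention the paper itself glosses over, but it does not change the result.
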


We recover the control input from the following mapping:
\begin{equation} \label{eq:q-def}
    u^i = q^i(\lambda,x^i) = \Xi^i \partial (h^i(\cdot,x^i))^{*}((-E^i)^\top \lambda).
%    \left\{
%\begin{array}{r l}
%\arg\min_{\xi} & \; f^i(\xi,x^i) +
% \lambda^\top E^i \, \xi\\
%\text{s.t.} & \; D^i x^i + C^i \xi  \leq c^i
%\end{array} \right,
\end{equation}
where $\partial (h^i(\cdot,x^i))^{*}(\mu)=\arg\min_{\xi \in \mc Z^i(x^i)} f(\xi,x^i) - \mu^\top \xi$,  and $\Xi^i$ is a selection matrix that extracts the first time-horizon component from the control input trajectory.

%Thus, we are able to construct the closed-loop system \eqref{eq:closed-loop-system} by letting $q(\cdot,x) = (q^1(\cdot,x^1) ,\ldots, q^M(\cdot,x^M) )$.

\section{Stability and Robustness Analysis}
In this section we show that the closed-loop system is locally input-to-state stable (LISS) if we perform enough communications and if the regularization is small enough. In our analysis, we make extensive use of the optimization problem \eqref{eq:centralized-opt2} expressed in the following compact form: 
\begin{subequations} \label{eq:centralized-opt}
\begin{alignat}{2}
&\min_{u}~~&&f(u,x)\\
&~\mathrm{s.t.} && Dx + Cu \leq c \label{eq:local-constraints-condensed}
 \\
& && Eu + Fx\leq b\end{alignat}
\end{subequations}
where $f(u,x) = 1/2 \|(u,x)\|_M^2$ and $M, E, D, C, b$ and $c$ are stacked versions of the matrices in \eqref{eq:centralized-opt2} and Appendix~\ref{ap:POCPM}.

We begin by expressing the closed-loop system in error coordinates, namely,
\begin{subequations} \label{eq:error-system}
\begin{align} 
\Sigma_1:&\begin{cases} \label{eq:error1}
    ~x_{t+1} = A x_t + B(\kappa_\eps(x_t) + \delta u_t) + d_t ,\\
    ~\Delta x_t = h_1(x_t,\delta u_t,d_t)
\end{cases}\\
\Sigma_2: &\begin{cases} \label{eq:error2}
    ~e_{t+1} = \mc G^\ell (e_{t},x_t,\Delta x_t),\\
    ~\delta u_t = h_2(e_t,x_t)
\end{cases}
\end{align}

\end{subequations}
where $e_t = \lambda_t - S^d_\eps(x_t)$, $h_1(x_t,\delta u_t,d_t) = (A-I) x_t + B(\kappa_\epsilon(x_t) + \delta u_t) + d_t$, $h_2(e_t,x_t) = q(e_t + S^d_\eps(x_t),x_t) - q(S^d_\eps(x),x_t)$, $\mc{G}^\ell(e_t,x_t,\Delta x_t) = \mc{T}^\ell(e_t + S^d_\eps(x_t),x_t + \Delta x_t) - S^d_\eps(x_t + \Delta x_t)$, and $\kappa_\eps(x) = q(S^d_\eps(x),x)$.

Next, we will show that both subsystems in \eqref{eq:error-system} are input-to-state stable (ISS) and derive a sufficient condition for ISS of their feedback interconnection using the small-gain theorem \citep{Jiang2004NonlinearApplications}. We begin with the plant.

\begin{thm}(LISS of the Plant Subsystem \eqref{eq:error1}) \label{thm:plant-iss}
There exists $\tilde{\eps}, \gamma_1,\sigma_1 > 0$ and $\beta_1 \in \mc{KL}$ such that if $\eps \leq \tilde{\eps}$ then
\begin{equation}
    \|x_t\| \leq \beta_1(\|x_0\|,t) + \gamma_1~\|\delta u\|_\infty+ \sigma_1~\|d\|_\infty,
\end{equation}
and $\{x_t\} \subseteq \Gamma_N$ for $\|x_0\|, \|\delta u\|_\infty, \|d\|_\infty$ sufficiently small.
\end{thm}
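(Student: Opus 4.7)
The plan is to view \eqref{eq:error1} as a perturbation of the nominal (unregularized and exact) MPC closed loop $x_{t+1} = Ax_t + B\kappa(x_t)$, which is locally exponentially stable under Assumption~\ref{ass:plant-mpc-properties} by \cite[\S 5.6.3]{goodwin2006constrained}, and then invoke a standard ISS small-perturbation argument to absorb the three sources of error: the dual regularization ($\eps$), the iteration error ($\delta u_t$), and the exogenous disturbance ($d_t$).

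First, I would establish that the regularized feedback $\kappa_\eps$ converges to $\kappa$ on compact subsets of $\Gamma_N$ as $\eps \downarrow 0$. This is a Tikhonov-type argument: $S^d_\eps(x)$ is the unique minimizer of the $\eps$-strongly convex cost \eqref{eq:DualP_reg} (Lemma~\ref{lmm:dual-props}(i)), and classical sensitivity results for strongly convex parametric programs give $S^d_\eps(x) \to S^d(x)$ uniformly on compacta as $\eps \downarrow 0$, with $S^d_\eps(0)=0$. Since $q(\cdot,x)$ in \eqref{eq:q-def} is locally Lipschitz in its first argument, the identity $\kappa_\eps(x) = q(S^d_\eps(x),x)$ then yields a bound of the form $\|\kappa_\eps(x) - \kappa(x)\| \leq \omega(\eps)\|x\|$ on a neighbourhood of the origin, with $\omega(\eps)\to 0$ as $\eps\to 0$, and an analogous Lipschitz bound for $\kappa_\eps$ itself.

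Second, I would use the value function $V(x)$ of \eqref{eq:general_eq_problem} as a candidate Lyapunov function. Under Assumption~\ref{ass:plant-mpc-properties}, $V$ is positive definite, radially unbounded on $\Gamma_N$, locally Lipschitz (in fact piecewise quadratic, by multiparametric programming), and satisfies a decrease $V(Ax+B\kappa(x))-V(x) \leq -\alpha(\|x\|)$ on sublevel sets $\Omega_c = \{V\leq c\}\subset \Gamma_N$ for some $\alpha\in\Kinf$. For the perturbed update $x^+=Ax+B(\kappa_\eps(x)+\delta u)+d$, local Lipschitzness of $V$ with constant $L_V$ on $\Omega_c$ gives
\begin{equation*}
V(x^+)-V(x)\leq -\alpha(\|x\|)+L_V\|B\|\bigl(\omega(\eps)\|x\|+\|\delta u\|\bigr)+L_V\|d\|.
\end{equation*}
Choosing $\tilde\eps$ small enough that $L_V\|B\|\omega(\tilde\eps)\|x\|$ is dominated by $\tfrac12\alpha(\|x\|)$ on an inner sublevel set $\Omega_{c'}\subset \Omega_c$ yields a strict ISS-Lyapunov inequality in $(\delta u, d)$. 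Forward invariance of $\Omega_{c'}$ for sufficiently small $(\|x_0\|,\|\delta u\|_\infty,\|d\|_\infty)$ gives $\{x_t\}\subset \Gamma_N$, and the $\KL$/$\K$ bound of the theorem then follows from the ISS-Lyapunov characterization in \cite{Jiang2004NonlinearApplications}. The bound on $\Delta x_t = h_1(x_t,\delta u_t,d_t)$ comes directly from the triangle inequality together with the Lipschitzness of $\kappa_\eps$ and $\kappa_\eps(0)=0$.

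The main obstacle I expect is making the perturbation bound $\omega(\eps)\|x\|$ genuinely \emph{linear} in $\|x\|$ and \emph{uniform} in $\eps$ over the relevant sublevel set, rather than degrading to a constant perturbation that would destroy the Lyapunov decrease near the origin. This requires a careful local sensitivity analysis of $S^d_\eps$ jointly in $(x,\eps)$ at the origin, leveraging the facts that the active-set structure is locally constant (since $\0$ lies in the interior of $\mc X^i$, $\mc U^i$ by Assumption~\ref{ass:plant-mpc-properties}(ii)) and that $S^d(0)=0$, so that the regularized KKT system is a smooth perturbation of the nominal one.
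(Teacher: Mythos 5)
Your overall architecture matches the paper's: a value-function-based Lyapunov argument for the nominal closed loop, with the regularization treated as a state-dependent perturbation and $\delta u$, $d$ as additive inputs, followed by a sublevel-set invariance argument to keep $x_t \in \Gamma_N$. (The paper uses $\phi = \sqrt{V}$, which is sandwiched as $a_1\|x\| \leq \phi(x) \leq a_2\|x\|$ and globally Lipschitz on $\Gamma_N$, giving an exponential $\beta_1$; your use of $V$ itself with a local Lipschitz constant is an inessential variation.)

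However, there is a genuine gap exactly at the step you flag as the "main obstacle": the bound $\|\kappa_\eps(x)-\kappa(x)\| \leq \omega(\eps)\|x\|$ with $\omega(\eps)\to 0$ is asserted, not proved, and the route you sketch for it would not deliver it in the needed generality. First, "classical sensitivity results give $S^d_\eps(x)\to S^d(x)$" is not well posed here: $S^d$ is in general set-valued, Tikhonov regularization selects its minimum-norm element, and what you actually need is a quantitative rate, uniform in $x$ and \emph{linear} in $\|x\|$, which does not follow from qualitative convergence on compacta; moreover, passing through $\|S^d_\eps(x)-S^d(x)\|$ and Lipschitzness of $q(\cdot,x)$ is harder than necessary, since dual-variable perturbation bounds are delicate while the quantity that enters the Lyapunov inequality is the primal error. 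Second, your proposed fix (locally constant active set near the origin, so the regularized KKT system is a smooth perturbation) only covers the region where all constraints are strictly inactive, whereas the theorem's invariance region is a ball $\{\|x\|\leq\rho\}\subset\Gamma_N$ on which constraints may be active during transients; active-set changes make the solution map only piecewise affine, so the "smooth perturbation" argument breaks precisely where the bound is needed. The paper closes this gap differently: it applies the regularization estimate of Koshal et al. to bound the \emph{primal} error by $\frac{\eps}{2\mu}\|\lambda\|^2$ for any unregularized dual solution $\lambda \in S^d(x)$, and then uses outer Lipschitz continuity of the polyhedral KKT solution mapping together with the fact that all constraints are inactive at $x=0$ (so the mapping's value at $0$ is $\{0\}$) to get $\|\lambda\| \leq L_1\|x\|$ for \emph{all} $x \in \Gamma_N$, yielding $\|\kappa(x)-\kappa_\eps(x)\| \leq \sqrt{\eps}\,L_\kappa\|x\|$ on the whole feasible set. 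Without this (or an equivalent) quantitative lemma, your ISS-Lyapunov inequality and the choice of $\tilde\eps$ cannot be justified.
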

\begin{proof}
See Appendix~\ref{app:plant-iss}.
{\hfill $\square$}
\end{proof}

This theorem shows that the plant is ISS with respect to the exogenous disturbance $d$ as well as the suboptimality error $\delta u$ caused by incomplete optimization if the regularization is small enough. This is expected as a large regularization may warp the MPC feedback law $\kappa_\eps$ enough to destabilize the closed-loop system. Of note is that the regularization does not cause an offset in the equilibrium point, this is because the regularization term drives $\lambda$ toward 0 which matches the unregularized dual solution at $x = 0$ since the constraints must all be inactive at the origin by Assumption~\ref{ass:plant-mpc-properties}.

Next, we show that the algorithm, viewed as a dynamical system, is ISS with respect to the state increment $\Delta x$.
\begin{thm}(ISS of Algorithm subsystem \eqref{eq:error2}) \label{thm:algo-iss}
Let $\eps\! >\! 0$ and let $\alpha \in (0,1/L_{\varphi_\epsilon})$ where $L_{\varphi_\epsilon} $ is the Lipschitz constant of the regularized dual gradient $\nabla \varphi_\epsilon$. If $\ell > \frac{2}{\sqrt{\alpha \eps}} - 1$, then
\begin{equation}
    \|e_t\| \leq \eta(\ell)^t \|e_0\| + \gamma_2(\ell)~\|\Delta x\|_\infty
\end{equation}
where $\eta(\ell) = \frac{2}{\sqrt{\alpha \eps}} \left(\frac{1}{\ell+1}\right) < 1$ and $\gamma_2(\ell) = \frac{\eta(\ell)}{\alpha (1 - \eta(\ell))}$.
\end{thm}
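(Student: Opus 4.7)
The plan is to derive a per-step contractive recursion on $\|e_t\|$ from Theorem~1 combined with the $\epsilon$-strong convexity established in Lemma~1(i), and then iterate the recursion to obtain the stated ISS bound.

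\textbf{Step 1 (contraction in Euclidean norm).} First I would convert the sublinear convergence bound in Theorem~1 into a Lipschitz-type bound on the iterates. By $\epsilon$-strong convexity of $\psi_\eps(\cdot,x)$ with minimizer $S^d_\eps(x)$ one has $\tfrac{\eps}{2}\|\lambda-S^d_\eps(x)\|^2 \le \psi_\eps(\lambda,x)-\psi_\eps(S^d_\eps(x),x)$ for every $\lambda$. Applying this to $\lambda\leftarrow\mc T^\ell(\lambda,x)$ and using Theorem~1 gives
\[
\|\mc T^\ell(\lambda,x) - S^d_\eps(x)\| \le \eta(\ell)\,\|\lambda - S^d_\eps(x)\|,
\]
with $\eta(\ell) = 2/(\sqrt{\alpha\eps}\,(\ell+1))$. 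The standing hypothesis $\ell > 2/\sqrt{\alpha\eps}-1$ is exactly the condition $\eta(\ell)<1$, so $\mc T^\ell(\cdot,x)$ is a strict contraction for every fixed $x$.

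\textbf{Step 2 (handling the parameter shift).} Next I would insert $\pm S^d_\eps(x_t+\Delta x_t)$ into the definition $e_{t+1} = \mc T^\ell(e_t + S^d_\eps(x_t),\,x_t+\Delta x_t) - S^d_\eps(x_t+\Delta x_t)$, apply the Step~1 contraction at the parameter $x_t+\Delta x_t$, and use the triangle inequality and Lipschitz continuity of $S^d_\eps$ in $x$ (stated immediately after Lemma~1) to obtain
\[
\|e_{t+1}\| \le \eta(\ell)\,\|e_t\| + \eta(\ell)\,L_d\,\|\Delta x_t\|,
\]
where $L_d$ denotes the Lipschitz constant of $S^d_\eps$.

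\textbf{Step 3 (unrolling).} Since $\eta(\ell)<1$, iterating this scalar linear recursion and bounding the resulting tail by a geometric series against $\|\Delta x\|_\infty$ yields
\[
\|e_t\| \le \eta(\ell)^t\|e_0\| + \frac{L_d\,\eta(\ell)}{1-\eta(\ell)}\,\|\Delta x\|_\infty,
\]
which has exactly the structure claimed in the theorem.

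\textbf{Expected obstacle.} The only nontrivial piece is identifying $L_d$ so that the gain matches the stated $\gamma_2(\ell) = \eta(\ell)/(\alpha(1-\eta(\ell)))$, i.e.\ showing $L_d = 1/\alpha$. A clean way is to compare the proximal-gradient fixed-point equations $S^d_\eps(x) = (\Id+\alpha\mc I^*_{\mc E(x)})^{-1}\!\bigl(S^d_\eps(x) - \alpha\nabla\varphi_\eps(S^d_\eps(x),x)\bigr)$ at two parameters $x$ and $x'$, use nonexpansiveness of the resolvent together with the Lipschitz bound \eqref{eq:LipConst} on $\nabla_\lambda\varphi_\eps$, and track how the coupling-constraint data ($F^i$ in the definition of $\mc E(x)$) and the quadratic coupling in $\varphi_\eps$ scale with the step size. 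I expect this Lipschitz bookkeeping, keeping the factor of $1/\alpha$ explicit rather than folding it into a generic constant, to be the main technical work; the rest of the proof is a one-line contraction argument plus an elementary discrete-time Gronwall-type unrolling.
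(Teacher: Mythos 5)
Your Steps 1--3 reproduce the paper's argument essentially verbatim: the paper also combines the rate bound of Theorem~\ref{thm:algo-conv} with the $\eps$-strong convexity of $\psi_\eps(\cdot,x)$ to get $\|\mc T^\ell(\lambda,x)-S^d_\eps(x)\|\le \eta(\ell)\|\lambda-S^d_\eps(x)\|$, then inserts $\pm S^d_\eps(x_{t+1})$, invokes Lipschitz continuity of $S^d_\eps$ to obtain $\|e_{t+1}\|\le\eta(\ell)\|e_t\|+\eta(\ell)L_d^\eps\|\Delta x_t\|$, and unrolls the recursion. The one place you diverge is how Lipschitz continuity of $S^d_\eps$ in $x$ is justified: the paper proves it (Lemma~\ref{lmm:solution-regularity}) by observing that the regularized primal--dual pair solves the strongly monotone variational inequality \eqref{eq:VI-condensed}, which is the KKT system of the multi-parametric QP \eqref{eq:pd-reg-qp}, so Lipschitz continuity follows from standard mpQP solution-map results; your proposed route via the proximal-gradient fixed-point equation and nonexpansiveness of the resolvent is a reasonable alternative, but it is complicated by the fact that the resolvent itself depends on $x$ through $\mc E(x)$, so you would also need a bound on how $(\Id+\alpha \mc I^*_{\mc E(x)})^{-1}$ varies with $x$; the mpQP argument sidesteps this.

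On your ``expected obstacle'': you are right to flag it, but your plan to show $L_d^\eps=1/\alpha$ exactly will not succeed. The sensitivity of $S^d_\eps$ to $x$ enters through the problem data ($Gx$, $Fx$, $Dx$ and the parametric sets), so $L_d^\eps$ necessarily scales with these matrices and with $\eps$, and the $\alpha$-dependence of a single proximal-gradient step cancels at the fixed point; there is no structural reason for the constant to equal the reciprocal of the step size. Note that the paper itself does not close this gap: its proof derives the recursion with the generic constant $L_d^\eps$ and then states the gain as $\eta(\ell)\alpha^{-1}/(1-\eta(\ell))$ ``as claimed,'' silently replacing $L_d^\eps$ by $\alpha^{-1}$ (and the proof of Theorem~\ref{thm:main-theorem} uses yet another expression for $\gamma_2(\ell)$). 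So the correct statement of the gain should read $\gamma_2(\ell)=\eta(\ell)L_d^\eps/(1-\eta(\ell))$, which does not affect the qualitative ISS conclusion or the small-gain argument downstream, since $\gamma_2(\ell)\to 0$ as $\ell\to\infty$ either way.
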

\begin{proof}
See Appendix~\ref{app:algo-iss}.
{\hfill $\square$}
\end{proof}
This result is intuitive in the sense that by Theorem~\ref{thm:algo-conv}, Algorithm~1 converges for any constant $x\in \Gamma_N$, the state increment then acts as a disturbance. Regularization is crucial for this result, as it ensures that the dual solution mapping $S^d_\epsilon$ is well-behaved (single-valued and Lipschitz) and can be readily tracked.

Finally, we show that the interconnection of the two systems is ISS provided that enough algorithm iterations (communication rounds) are performed.
\begin{thm}(ISS of the plant-algorithm interconnection) \label{thm:main-theorem}
Suppose that $\ell$ and $\eps$ are chosen to satisfy the conditions of Theorems~\ref{thm:plant-iss} and \ref{thm:algo-iss}. Then, there exists a finite $\ell^* > 0 $ such that if $\ell \geq \ell^*$ then $\{x_t\} \subseteq \Gamma_N$ and the closed-loop system \eqref{eq:error-system} is LISS with respect to the disturbance $d$.
\end{thm}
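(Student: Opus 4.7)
The plan is to apply the nonlinear small-gain theorem for ISS interconnections of \citep{Jiang2004NonlinearApplications} to the feedback loop $\Sigma_1 \leftrightarrow \Sigma_2$ in \eqref{eq:error-system}. Theorems~\ref{thm:plant-iss} and \ref{thm:algo-iss} already provide ISS bounds for each subsystem in isolation; what remains is to certify that the composite loop gain can be rendered strictly less than one by choosing $\ell$ large enough, and to organize the local bounds so that forward invariance of a neighborhood of the origin inside $\Gamma_N$ is preserved.

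First, I would establish linear bounds for the interconnection maps $h_1$ and $h_2$ on a compact neighborhood $\mc B \subset \Gamma_N$ of the origin. The map $h_2(e,x) = q(e+S^d_\eps(x),x) - q(S^d_\eps(x),x)$ is Lipschitz in $e$ with some constant $L_q$, because $q(\cdot,x)$ is a single-valued selection of the gradient of the conjugate of the strongly convex function $h(\cdot,x)$ (Assumption~\ref{ass:plant-mpc-properties}(iii) ensures $H^i \in \bb S_{++}$); this gives $\|\delta u_t\| \leq L_q \|e_t\|$ on $\mc B$. The map $h_1(x,\delta u,d) = (A-I)x + B(\kappa_\eps(x)+\delta u) + d$ is affine in $\delta u$ and $d$ and locally Lipschitz in $x$ (since $\kappa_\eps = q\circ S^d_\eps$ inherits Lipschitz continuity from $S^d_\eps$ and from $q$), yielding $\|\Delta x_t\| \leq L_x\|x_t\| + L_u\|\delta u_t\| + \|d_t\|$ on $\mc B$.

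Next, I would chain these bounds through the ISS estimates taken in $\|\cdot\|_\infty$ form over a time window. Substitution produces
\begin{align*}
\|x\|_\infty &\leq \beta_1(\|x_0\|,0) + \gamma_1 L_q \|e\|_\infty + \sigma_1\|d\|_\infty,\\
\|e\|_\infty &\leq \|e_0\| + \gamma_2(\ell)\bigl(L_x\|x\|_\infty + L_u L_q\|e\|_\infty + \|d\|_\infty\bigr).
\end{align*}
Solving the second inequality for $\|e\|_\infty$ requires $\gamma_2(\ell) L_u L_q < 1$; eliminating $\|e\|_\infty$ and substituting into the first reduces the ISS small-gain condition to
\[
\frac{\gamma_1 L_q\, \gamma_2(\ell)\, L_x}{1-\gamma_2(\ell) L_u L_q} < 1.
\]
Since $\gamma_2(\ell) = O(1/\ell)$ by Theorem~\ref{thm:algo-iss}, both inequalities hold for every $\ell \geq \ell^*$ with some finite $\ell^*$. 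The desired $\KL + \K$ estimate on $\|x_t\|$ in terms of $\|x_0\|, \|e_0\|$ and $\|d\|_\infty$ then follows by combining the transient decays $\beta_1(\|x_0\|,t)$ and $\eta(\ell)^t\|e_0\|$ with the steady-state gain derived above; collapsing $\|e_0\|$ into a $\KL$ term in $\|x_0\|$ uses that $S^d_\eps$ is Lipschitz at the origin with $S^d_\eps(0)=0$.

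The main obstacle will be reconciling the \emph{local} nature of every Lipschitz estimate with the requirement $\{x_t\}\subseteq\Gamma_N$ imposed by Theorem~\ref{thm:plant-iss}. Since $L_q, L_x, L_u$ are only valid on $\mc B$, one has to argue forward invariance of $\mc B$ under the closed-loop dynamics before invoking the small-gain calculation. The standard remedy is to iteratively shrink the admissible $\|x_0\|, \|e_0\|, \|d\|_\infty$ until the resulting uniform bound on $\|x\|_\infty$ keeps $x_t$ inside $\mc B$ for all $t\ge 0$; at that point, the Lipschitz bounds remain self-consistent and the feasibility conclusion of Theorem~\ref{thm:plant-iss} propagates along the entire trajectory.
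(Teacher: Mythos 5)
Your proposal is correct in its main line and rests on the same two pillars as the paper's proof: Lipschitz continuity of $q(\cdot,x)$ uniformly in $x$ (so $\|\delta u_t\|\le L_q\|e_t\|$) and the fact that the algorithm gain $\gamma_2(\ell)=O(1/\ell)$ vanishes, so the small-gain condition of \citep{Jiang2004NonlinearApplications} can be enforced by taking $\ell$ large. The difference is in how the interconnection signal $\Delta x$ is handled. You bound it through the map $h_1$, i.e.\ $\|\Delta x_t\|\le L_x\|x_t\|+L_u\|\delta u_t\|+\|d_t\|$, which requires Lipschitz constants for $\kappa_\eps$ and leads to a two-condition sup-norm chaining ($\gamma_2(\ell)L_uL_q<1$ plus the composed-gain inequality). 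The paper instead works with asymptotic gains (\cite[Lemma 3.8]{Jiang2001Input-to-stateSystems}) and uses the elementary bound $\|\Delta x_t\|=\|x_{t+1}-x_t\|\le\|x_{t+1}\|+\|x_t\|$, so that $\overline{\lim}_{t\to\infty}\|\Delta x_t\|\le 2\,\overline{\lim}_{t\to\infty}\|x_t\|$ and the small-gain condition collapses to the single inequality $2L_q\gamma_1\gamma_2(\ell)<1$, with no Lipschitz constants of $h_1$ needed. Both routes yield a finite $\ell^*$; yours gives a threshold depending on $L_x,L_u$, the paper's is cleaner but otherwise equivalent in spirit. Your explicit treatment of forward invariance (shrinking the admissible $\|x_0\|,\|e_0\|,\|d\|_\infty$ so the trajectory stays in a ball inside $\Gamma_N$ where the local estimates are valid) is sound and in fact more detailed than the paper, which delegates this to the restriction already established in Theorem~\ref{thm:plant-iss}.

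One step you should drop or repair: the attempt to ``collapse $\|e_0\|$ into a $\KL$ term in $\|x_0\|$'' via Lipschitz continuity of $S^d_\eps$ at the origin. The initial error $e_0=\lambda_0-S^d_\eps(x_0)$ depends on the arbitrary warm-start $\lambda_0$, so it is not bounded by $\|x_0\|$ unless one assumes a cold start $\lambda_0=0$, which runs against the whole point of the scheme. This collapse is also unnecessary: the closed-loop system \eqref{eq:error-system} has composite state $(x_t,e_t)$, and the small-gain theorem directly delivers a LISS estimate with a $\KL$ term in $\|(x_0,e_0)\|$, which is exactly what the theorem statement requires.
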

\begin{proof}
See Appendix~\ref{app:sys-iss}. {\hfill $\square$}
\end{proof}

%This theorem shows that if enough algorithm iterations, i.e., rounds of communications  between the agents, occur during each sampling period then the closed-loop system will be robustly stable despite the inexact optimization, in fact it recovers the qualitative properties of optimal MPC.
This theorem shows that if the communication budget at each sampling period is sufficiently large (enough iteration of the algorithm are performed) then the closed loop system is robustly stable despite the inexact optimization, in fact it recovers the qualitative properties of optimal MPC.
The stability result is necessarily local since the optimal control law is only stabilizing for $\Gamma_N$, the set of states for which the optimization problem has a solution.

%%%%%%%%%%%%%%%%%%%%%%%
\section{Illustrative Example} \label{sec:Numerics}

We consider $M$ robotic agents moving on the 2D plane. Each agent has a state $x^i = (p^i,v^i) \in\mathbb{R}^4$ that encodes position and velocity in the plane $\mathbb{R}^2$, and double-integrator dynamics (with control $u_i \in \mathbb{R}^2$) of the form
\begin{align}
     x^i_{t+1} &= \left(I_2 \otimes 
        \begin{bmatrix}
        1 & 1 \\ 0 & 1
    \end{bmatrix}\right) x^i_t +
        \left( I_2 \otimes 
        \begin{bmatrix}
            0 \\ 1
        \end{bmatrix} \right) u^i_t.
        %\\ 
        % &:= Az_i(t) + Bu_i(t)
\end{align}
We consider the task of driving the agents to a configuration encoded in the target $\bar x^i = (\bar p^i,0)$ while satisfying input constraints $ u^i \in [\underline{u}^i,~\overline{u}^i]$ for all $i \in \mc M$, and subject to the coupling constraints
\begin{align}
|p^i_t-p^j_t| \leq b_{ij},~~i\neq j, \quad \forall t\in \bb N_{\geq 0}, \label{eq:onenormrelax}
\end{align}
%\dlm{Mathias can you fix the $ij$ indexing? I'm not seeing what's wrong with it, it looks fine to me}
where the absolute value is taken element-wise, and $b_{ij} \in \mathbb{R}^2_{>0}$ are parameters, that ensure that the agents remains within communication distance of each other at all times. This leads to an MPC optimal control problem
\begin{subequations}  \label{eq:agent_sim}
\begin{alignat}{3}
&\min_{u}~~&& \sum_{\agt\in \mc M} \sum_{k \in \mc{H}}\|x^i_k - \bar x^i\|_{Q^i}^2 &&+ \|u^i_k\|_{R^i}^2 \\
&~\mathrm{s.t.} && x^i_{k+1} = A x^i_{k} + B u^i_{k} && \forall i \in \mc{M},~ \forall k \in \mc{H} \\
&~ && x^i_N = \bar x^i,~~u^i_k \in \mc{U}^i && \forall i \in \mc{M},~ \forall k \in \mc{H}  \\
&  && |p^i_k-p^j_k| \leq b_{ij}, &&  \forall i\neq j,~~ \forall k \in \mc{H}\setminus \{0\} \label{eq:couple_con}
\end{alignat}
\end{subequations}
which is of the form \eqref{eq:general_eq_problem} and uses a terminal state constraint (and thus $P^i = 0$ is a valid choice).

We simulate 3 agents throughout a maneuver taking them from an initial formation to a target formation, as depicted in Fig.~\ref{fig:formation}.
The communication graph of the agents is the complete graph on 3 nodes.
We set $Q^i=I_4$ and $R^i=I_2$ for each $i\in\{1,2,3\}$, and each $b_{ij} = \mathbf{1}_2$.
Problem~\eqref{eq:agent_sim} is solved with Algorithm~1 for varying rounds of communication per timestep $l$.
As depicted visually in Fig.~\ref{fig:formation}, the closed-loop dynamics are stable and the agents converge to the target points with even a single round of communication per timestep.

As more rounds of communication per sampling period are performed, the trajectory converges to the optimal one given by the MPC problem. In Fig.~\ref{fig:const_vio}, one can see that increasing the number of rounds of communication per timestep substantially decreases the coupling constraint violations \eqref{eq:couple_con} throughout the maneuver.

\begin{figure}[htbp]
    \centering
    \includegraphics[width=\columnwidth]{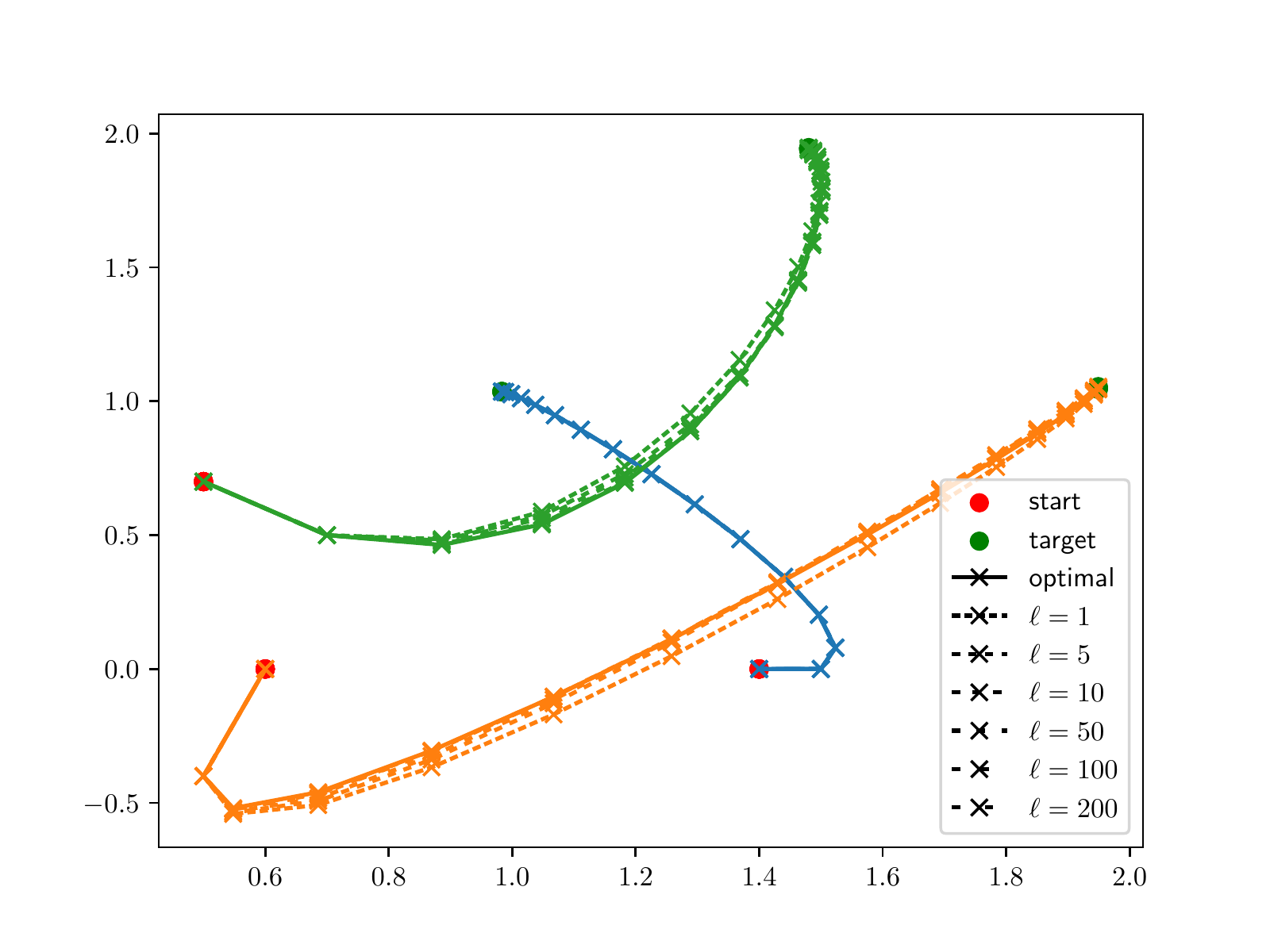}
    \caption{The closed-loop dynamics become stable with a single communication iteration per timestep and converges to the optimal MPC feedback law as more communication resources become available.}
    \label{fig:formation}
\end{figure}

\begin{figure}[htbp]
    \centering
    \includegraphics[width=\columnwidth]{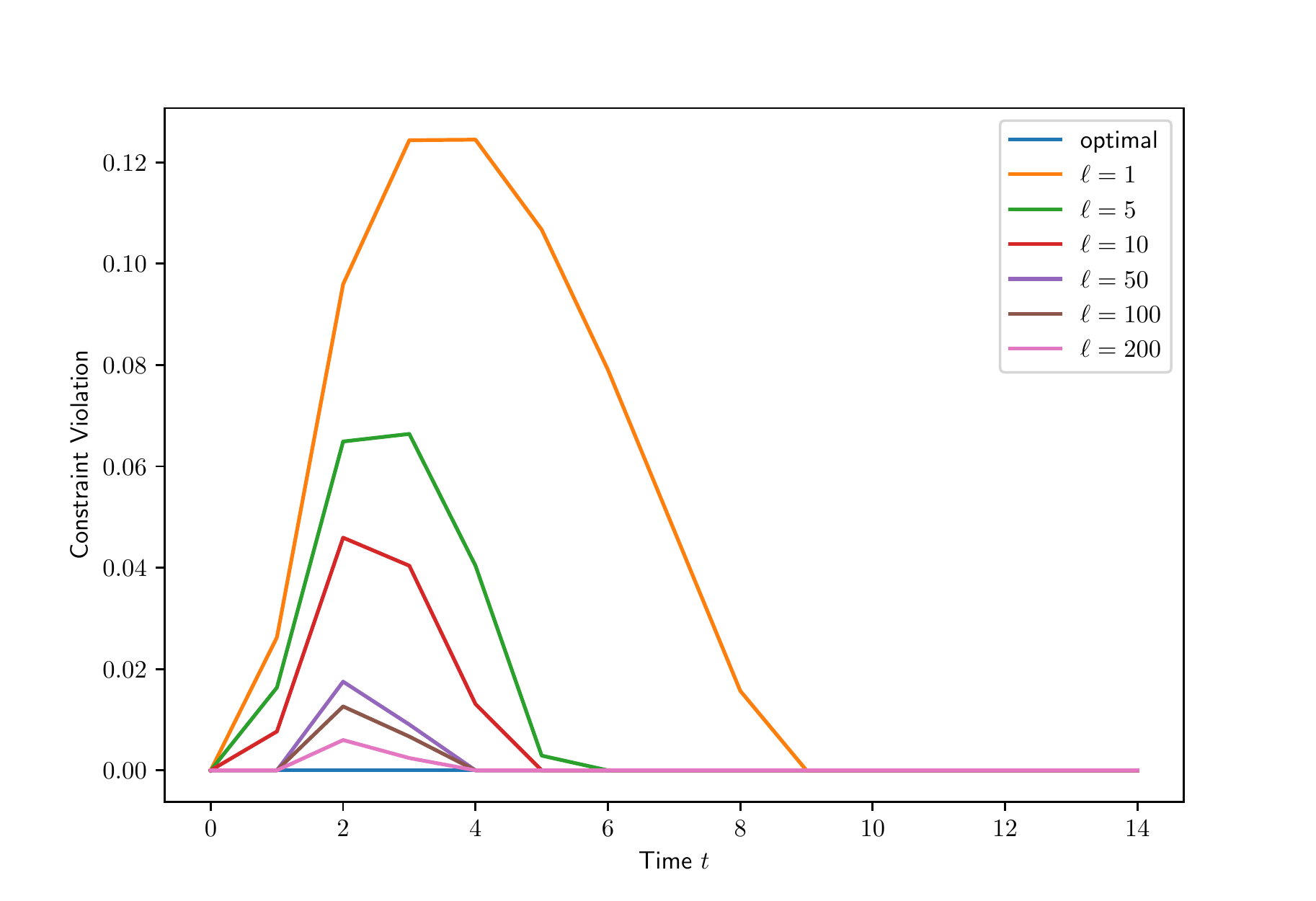}
    \caption{Performing more rounds of communication reduces the constraint violation.}
    \label{fig:const_vio}
\end{figure}

%%%%%%%%%%%%%%%%%%%%%%%%
\section{Conclusion} \label{sec:Conclusion}
We proposed an MPC scheme for multi-agent systems in which at each sampling instant the computation for the next control inputs are distributed both in space, across the subsystems, and in time, by maintaining a running solution estimate of the underlying optimal control problem. We proved robust stability of the proposed suboptimal MPC scheme if enough communication between the subsystems are performed at each sampling period. Future research directions include designing a fully-distributed scheme, handling coupling terms in the local cost functions, and precisely quantifying the stability bounds.

\appendix

\section{Condensed POCP Matrices}
\label{ap:POCPM}
The matrices in \eqref{eq:dOptCost} are $H^i = \bar H^i + (I_N \otimes R^i)$, $\bar H^i = (\hat B^i)^\top \hat H^i \hat B^i$, $G^i = (\hat B^i)^\top \hat H^i \hat A^i$, $W^i = Q^i + (\hat A^i)^\top \hat H^i \hat A^i$, $\hat H^i = \textrm{blkdiag}(I_N \otimes Q^i,\, P^i )$,
% $\hat H^i = \begin{bmatrix}
%(I_N \otimes Q^i) & 0 \\
%0 & P^i
%\end{bmatrix} $,
\begin{align*}
\hat B^i = \begin{bmatrix}
0       & 0       &     0\\
B^i     & 0       &     0\\
\vdots  & \ddots  &\vdots\\
(A^i)^{N-1}B^i& \cdots  &B^i 
\end{bmatrix}, 
 \text{ and }
\hat A^i =
\begin{bmatrix}
I\\
A^i\\
\vdots\\
(A^i)^{N}
\end{bmatrix}.
\end{align*}
Let $\hat L^i = \textrm{blkdiag}(I_N \otimes C_x^i, C_N^i)$, the matrices in \eqref{eq:local-constraints-condensed2} are
%$D^i =
%%\begin{bmatrix}
%%0\\
%%(I_N \otimes C_x^i) \hat A^i
%%\end{bmatrix},
%\textrm{blkdiag}(
%0, \,
%(I_N \otimes C_x^i) \hat A^i),
%$
\begin{align*}
D^i = \begin{bmatrix}
 0 \\ \hat L^i \hat A^i
\end{bmatrix};\quad
C^i = \begin{bmatrix}
I_N \otimes C_u^i\\[.3em]
\hat L^i \hat B^i
\end{bmatrix}, 
\quad
 c^i =
\begin{bmatrix}
\mathbf 1_N \otimes c_u^i\\
\mathbf 1_{N} \otimes c_x^i\\
c_N^i
\end{bmatrix}.
\end{align*}
Finally, the matrices in \eqref{eq:couplconstr2}  are $b = \mathbf 1_N \otimes \bar b$, $F^i = (I_{N} \otimes E_x^i) \hat A^i_{\text{rblk}[1:N]}$, and  $E^i = ( I_{N} \otimes E_x^i)\hat B^i_{\text{rblk}[1:N]} + (I_N \otimes E_u^i)$, where $\hat A^i_{\text{rblk}[1:N]}$ denotes the first $N$ row blocks of $\hat A^i$, and similarly for $\hat B^i_{\text{rblk}[1:N]}$.
%$b = \mathbf 1_{N+1} \otimes \bar b$, $F^i = (I_{N+1} \otimes E_x^i) \hat A^i$, $E^i = ( I_{N+1} \otimes E_x^i)\hat B^i + ([I_N; \mathbf 0_{1\times N}] \otimes E_u^i)$.

\section{Proof of Theorem~\ref{thm:plant-iss}} \label{app:plant-iss}
We will show that the root of the value function
\begin{equation}
    \phi(x) = \frac{1}{\sqrt{2}} \left \|\begin{bmatrix}
        S^p(x) \\ x
    \end{bmatrix} \right \|_M,~~M = \begin{bmatrix}
        H & G\\
        G^T & W
    \end{bmatrix}
\end{equation}
where $S^p(x) = \argmin{u}\{0.5 \|(u,x)\|_M^2 \; |\; Eu + Fx \leq b, \; Dx + Cu \leq c\}$ is the primal solution mapping, is an ISS Lyapunov function for \eqref{eq:error1}. 

\begin{lem} \label{lmm:value-function}
Given Assumption~\ref{ass:plant-mpc-properties}, there exist $L_p,a_1, a_2 > 0$ and $\beta \in (0,1)$ such that the primal mapping $S^p:\Gamma_N \to \reals^{Nm}$ and the value function $\phi:\Gamma_N \to \reals_{\geq0}$ have the following properties:
\begin{enumerate}[(i)]
\item $\|S^p(x) - S^p(y)\| \leq L_p \|x-y\|$
\item $a_1 \|x\| \leq \phi(x) \leq a_2 \|x\|$
\item $|\phi(x) - \phi(y)| \leq a_2\|x - y\|$
\item $\phi(Ax + B\kappa(x)) \leq \beta \phi(x)$
\end{enumerate}
\end{lem}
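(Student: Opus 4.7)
The plan is to exploit the identity $\phi(x)^2 = f(S^p(x),x) =: V(x)$, so that the four claims reduce to Lipschitz continuity of the primal solution map, quadratic sandwich bounds on the value function $V$, and a geometric decrease of $V$ along $x^+ = Ax + B\kappa(x)$. Properties (i)--(iii) are mostly standard parametric-QP bookkeeping; the real content is (iv).

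For (i), note that $H\succ 0$ (since $R^i \succ 0$ and $H = \hat B^{\!\top} \hat H \hat B + I_N\otimes R$), so \eqref{eq:centralized-opt} is a strongly convex QP in $u$ uniformly in $x$, with polyhedral feasible set depending linearly on the parameter. Classical parametric QP results (Robinson strong regularity, or equivalently the piecewise-affine structure from explicit MPC) then deliver a global Lipschitz constant $L_p$ for $S^p$ on $\Gamma_N$. For (ii), the lower bound is immediate because the stage cost already contains $\tfrac12\|\xi_0\|_{Q}^2=\tfrac12\|x\|_Q^2$ with $Q\succ 0$, giving $V(x)\geq \tfrac12 \lambda_{\min}(Q)\|x\|^2$. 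The upper bound is the standard MPC construction: in a neighbourhood of the origin the terminal controller $u^i_k=-K^i\xi^i_k$ is feasible over the whole horizon by invariance and constraint admissibility of $\mc X_f$, and a telescoping argument using the matrix inequality in Assumption~\ref{ass:plant-mpc-properties}(iv) yields $V(x)\leq \tfrac12\|x\|_P^2$. Taking square roots gives $a_1,a_2$.

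For (iii), the reverse triangle inequality in the $M$-weighted norm gives $|\phi(x)-\phi(y)|\leq \tfrac{1}{\sqrt 2}\|(S^p(x)-S^p(y),\,x-y)\|_M$; bounding the $M$-norm by $\sqrt{\lambda_{\max}(M)}$ times the Euclidean norm and invoking (i) produces a linear bound with constant proportional to $\sqrt{L_p^2+1}$, which can be absorbed into $a_2$ by enlarging it if necessary.

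The main work is in (iv), which I would handle through the textbook MPC contraction argument. Given the optimal trajectory $(\xi^\star,\nu^\star)$ at $x$, construct a candidate at $x^+=\xi^\star_1$ with controls $(\nu^\star_1,\ldots,\nu^\star_{N-1},-K\xi^\star_N)$ and states propagated via the dynamics. Feasibility of the appended step---including the coupling constraint \eqref{eq:couplconstr2}---is inherited from invariance and constraint admissibility of $\mc X_f$ in Assumption~\ref{ass:plant-mpc-properties}(iv), while the matrix inequality in Assumption~\ref{ass:plant-mpc-properties}(iv) absorbs the added stage cost at step $N-1$ into the terminal-cost decrement. The result is the additive Lyapunov decrease $V(x^+)\leq V(x)-\tfrac12\|x\|_Q^2$. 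Combining with the upper bound from (ii) converts this to a multiplicative decrease $V(x^+)\leq \bigl(1-\lambda_{\min}(Q)/(2a_2^2)\bigr)V(x)$, and taking square roots yields $\phi(x^+)\leq \beta\,\phi(x)$ with $\beta=\sqrt{1-\lambda_{\min}(Q)/(2a_2^2)}\in(0,1)$. The subtlest point I anticipate is ensuring that ``constraint admissible'' in Assumption~\ref{ass:plant-mpc-properties}(iv) is read strongly enough to cover the coupling constraint along the terminal controller; the cleanest resolution is to restrict the statement to a sublevel set of $\phi$ on which the terminal-set invariance and the quadratic upper bound are simultaneously valid, consistent with the local nature of Theorem~\ref{thm:plant-iss}.
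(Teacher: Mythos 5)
Your proposal is correct in substance and follows the same skeleton as the paper for (i) and (iii): parametric-QP Lipschitz continuity of $S^p$ (the paper cites \cite[Theorem 4]{Bemporad2002TheSystems}), then the reverse triangle inequality in the $M$-weighted norm to get (iii) with $a_2 \propto \sqrt{\|M\|(1+L_p^2)}$. Where you diverge is (ii) and (iv). For the upper bound in (ii), the paper does not use the terminal-controller/telescoping construction at all: it simply observes that the bound is an immediate corollary of (iii), since $\phi(0)=0$ gives $\phi(x)=|\phi(x)-\phi(0)|\leq a_2\|x\|$ on all of $\Gamma_N$. Your terminal-controller argument only yields $\phi(x)\leq \tfrac{1}{\sqrt 2}\|x\|_P$ on a neighbourhood of the origin (essentially $\mathcal X_f$), which does not cover $\Gamma_N$ as the lemma requires; since you prove (iii) anyway, you should obtain the global upper bound from it rather than from the local construction. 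For (iv), the paper does not reprove anything: it invokes \cite[\S 5.6.3]{goodwin2006constrained}, whereas you reconstruct the standard shifted-trajectory argument (append the terminal controller, absorb the added stage cost via the matrix inequality in Assumption~\ref{ass:plant-mpc-properties}(iv), convert the additive decrease $V(x^+)\leq V(x)-\tfrac12\|x\|_Q^2$ into a multiplicative one using the quadratic upper bound). That reconstruction is sound and arguably more self-contained than the citation, and your conversion to $\beta\in(0,1)$ is correct.

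One caveat on your proposed escape hatch for the coupling constraints: restricting (iv) to a sublevel set of $\phi$ would weaken the lemma relative to how it is used in Appendix~\ref{app:plant-iss}, where the contraction $\phi(Ax+B\kappa(x))\leq\beta\phi(x)$ is applied for all $x\in\Gamma_N$ (the feasibility of the shifted trajectory is what makes the decrease hold wherever the OCP is feasible). The intended reading of ``invariant and constraint admissible'' in Assumption~\ref{ass:plant-mpc-properties}(iv) is that the terminal sets under the terminal controllers respect \emph{all} constraints, including the coupling constraint \eqref{eq:coupling_constraints}; with that reading your shifted-candidate argument goes through on all of $\Gamma_N$ and no sublevel-set restriction is needed.
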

\begin{proof}
Point (i) follows from \cite[Theorem 4]{Bemporad2002TheSystems} since \eqref{eq:centralized-opt} is a multi-parametric QP. The lower bound in (ii) follows directly from Assumption~\ref{ass:plant-mpc-properties} since $\phi(x) \geq \sqrt{0.5 (\|x\|_Q^2 + \|\kappa(x)\|_R^2)} \geq \sqrt{0.5} \|Q\| \|x\|$. 

Next we establish Lipschitz continuity, for any $x,y\in \Gamma_N$ we have that
\begin{align*}
|\phi(x) - \phi(y)|^2 &= 1/2|\|(S^p(x),x)\|_M - \|(S^p(y),y)\|_M|^2\\
&\leq 1/2 \|(S^p(x) - S^p(y), x-y)\|_M^2\\
&\leq 1/2 (\|M\| \|S^p(x) - S^p(y)\|^2 + \|x - y\|^2)\\
&\leq 1/2 \|M\| (1+L_p^2) \|x-y\|^2
\end{align*}
where $L_p$ is the Lipschitz constant of $S^p$ and we used the reverse triangle inequality to go from the first to the second line. Thus (iii) holds with $a_2 = \sqrt{0.5 \|M\|(1+L_p^2)}$; the upper bound in (ii) is an immediate corollary.

Since $P$ is chosen as the solution of the discrete-time algebraic Riccati equation, point (iv) follows from \cite[\S 5.6.3]{goodwin2006constrained}. \qed
\end{proof}

Next, we need to characterize the error $w(x) =  \kappa_\eps(x)-\kappa(x)$ in the control input caused by the regularization. 
\begin{lem}
Given Assumption~\ref{ass:plant-mpc-properties}, there exists $L_\kappa > 0$ such that for all $\eps > 0$ and $x\in \Gamma_N$
\begin{equation}
    \|w(x)\| = \|\kappa(x) - \kappa_\eps(x)\| \leq \sqrt{\eps} L_\kappa \|x\|.
\end{equation}
\end{lem}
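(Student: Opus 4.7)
The plan is to bound $\|\kappa(x) - \kappa_\eps(x)\|$ via $\|u^*(x) - u_\eps(x)\|$, where $u^*(x)$ is the (unique) primal solution of \eqref{eq:centralized-opt2} (equivalently, $u^* = q(\lambda^*, x)$ for any $\lambda^* \in S^d(x)$) and $u_\eps := q(S^d_\eps(x), x)$ is the primal iterate recovered from the regularized dual. I would derive two strong-convexity inequalities relating $u^*$ and $u_\eps$ through $\lambda^*$ and $\lambda_\eps := S^d_\eps(x)$, and then close the loop using the dual estimates afforded by Lemma~\ref{lmm:dual-props}.

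First, I would exploit $\mu_f$-strong convexity of $f(\cdot, x)$ (with $\mu_f = \lambda_{\min}(H) > 0$ by Assumption~\ref{ass:plant-mpc-properties}). Writing the optimality conditions of the regularized dual \eqref{eq:DualP_reg} gives a perturbed complementary slackness at $\lambda_\eps$, namely $E u_\eps + F x - b \leq c\eps\, \lambda_\eps$ componentwise together with $\lambda_\eps^\top(E u_\eps + F x - b - c\eps\, \lambda_\eps) = 0$, where $c$ depends only on the regularization scaling. Combining this with the KKT conditions for $u^*$ (in particular $\lambda^{*\top}(E u^* + F x - b) = 0$) and strong convexity of $f$ at $u^*$ yields the lower estimate
\begin{equation*}
f(u_\eps, x) - f(u^*, x) \geq -c\eps\, \lambda^{*\top} \lambda_\eps + \tfrac{\mu_f}{2}\|u_\eps - u^*\|^2,
\end{equation*}
while the optimality of $u_\eps$ as the argmin of $u \mapsto f(u,x) + \lambda_\eps^\top E u$ over $u \in \mc Z(x)$, together with primal feasibility of $u^*$ (which gives $\lambda_\eps^\top(Eu^*+Fx-b)\leq 0$), gives the upper estimate
\begin{equation*}
f(u_\eps, x) - f(u^*, x) \leq -c\eps\|\lambda_\eps\|^2 - \tfrac{\mu_f}{2}\|u_\eps - u^*\|^2.
\end{equation*}
Adding the two estimates and rearranging yields $\mu_f \|u_\eps - u^*\|^2 \leq c\eps\, \lambda_\eps^\top(\lambda^* - \lambda_\eps) \leq c\eps\, \|\lambda_\eps\|\,\|\lambda^* - \lambda_\eps\|$.

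Next, I would invoke Lemma~\ref{lmm:dual-props}(i). The $\eps$-strong convexity of $\psi_\eps(\cdot,x)$ at its minimizer $\lambda_\eps$, combined with $\psi_\eps(\lambda^*,x) \geq \psi_\eps(\lambda_\eps,x)$ and $\lambda^* \in S^d(x)$ (so $\psi(\lambda^*)\leq \psi(\lambda_\eps)$), yields $\|\lambda_\eps - \lambda^*\|^2 \leq 2(\|\lambda^*\|^2 - \|\lambda_\eps\|^2) \leq 2\|\lambda^*\|^2$, and in particular $\|\lambda_\eps\| \leq \|\lambda^*\|$. Substituting back gives $\|u_\eps - u^*\| \leq C\sqrt{\eps}\,\|\lambda^*\|$ for an explicit constant $C$. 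Finally, I would select $\lambda^*$ as the minimum-norm element of $S^d(x)$: by standard multi-parametric QP sensitivity theory this selection is (locally) Lipschitz on $\Gamma_N$, and Assumption~\ref{ass:plant-mpc-properties}(ii) ensures that all constraints are strictly inactive at $x=0$, so that $\lambda^*(0) = 0$. Hence $\|\lambda^*(x)\| \leq L_\lambda \|x\|$, and combining with $\|\kappa(x) - \kappa_\eps(x)\| \leq \|\Xi\|\,\|u^*(x) - u_\eps(x)\|$ produces the claim with $L_\kappa = \|\Xi\| C L_\lambda$.

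The main obstacle is extracting the perturbed complementary-slackness condition from the regularized KKT system and threading it through both strong-convexity estimates without losing the tight $\sqrt{\eps}$ scaling. In particular, exploiting that both $\lambda^*$ and $\lambda_\eps$ are nonnegative is what makes the sign of $\lambda^{*\top}\lambda_\eps$ definite and allows the cancellation in the combined inequality to work cleanly. A secondary technical point is justifying the bound $\|\lambda^*(x)\| \leq L_\lambda \|x\|$ on (at least a neighborhood of $0$ in) $\Gamma_N$, which relies on the piecewise-affine structure of the minimum-norm dual selection of a multi-parametric QP together with strict feasibility of the constraints at the origin.
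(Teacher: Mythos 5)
Your proposal is correct, and its core is a genuinely different route from the paper's. The paper obtains the $\sqrt{\eps}$-scaling in one step by invoking \cite[Prop.~3.1]{koshal2011multiuser}, which directly gives $\mu\|S^p(x)-S^p_\eps(x)\|^2 + \tfrac{\eps}{2}\|S^d_\eps(x)\|^2 \leq \tfrac{\eps}{2}\|\lambda\|^2$ for every $\lambda \in S^d(x)$; you instead re-derive an equivalent estimate from first principles, pairing the quadratic-growth inequality of the Lagrangian at $(u^*,\lambda^*)$ with the perturbed complementary slackness satisfied by $(u_\eps,\lambda_\eps)$, and then using the Tikhonov-type bounds $\|\lambda_\eps\|\leq\|\lambda^*\|$, $\|\lambda^*-\lambda_\eps\|\leq\sqrt{2}\|\lambda^*\|$ from the $\eps$-strong convexity of $\psi_\eps$ (Lemma~\ref{lmm:dual-props}). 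The payoff of your version is self-containedness and explicit constants; the paper's citation is shorter and hides exactly the bookkeeping you carry out (in fact your combined inequality already gives $\mu_f\|u_\eps-u^*\|^2\leq c\eps\,\lambda_\eps^\top\lambda^*\leq c\eps\|\lambda^*\|^2$, so the detour through $\|\lambda^*-\lambda_\eps\|$ is not even needed). The final step, bounding $\|\lambda^*(x)\|\leq L_\lambda\|x\|$ from inactivity of all constraints at the origin, is conceptually the same as the paper's, which argues via the polyhedral KKT solution map of the fully dualized problem being outer Lipschitz \cite[Ex.~3D.2]{Dontchev2009ImplicitMappings} with $\tilde S(0)=\{0\}$.

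One caveat on that last step: you assert that the minimum-norm dual selection is (locally) Lipschitz on $\Gamma_N$, which is stronger than what you need and not obviously true (dual solution sets of a multi-parametric QP need not admit Lipschitz selections under degeneracy). What suffices --- and what polyhedrality actually delivers --- is calmness (outer Lipschitz continuity) of the set-valued dual solution map at $x=0$ together with $S^d(0)=\{0\}$, yielding $S^d(x)\subseteq L_1\|x\|\mc{B}$ for $x$ near the origin; any measurable selection, including the minimum-norm one, then inherits the linear bound. Also note that, like the paper, you implicitly need the coupling constraint to be strictly satisfied at the origin (i.e., $\bar b>0$), since Assumption~\ref{ass:plant-mpc-properties}(ii) only covers the local sets $\mc{U}^i,\mc{X}^i$; with that reading, your argument goes through.
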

\begin{proof}

Applying \cite[Prop 3.1]{koshal2011multiuser} to the regularized dual problem for each $x\in \Gamma_N$ yields the bound
\begin{equation} \label{eq:dual_bound}
    \mu \|S^p(x) - S^p_\eps(x)\|^2 + \frac{\eps}{2}\|S^d_\eps(x)\|^2 \leq \frac{\eps}{2}\|\lambda\|^2~~\forall \lambda \in S^d(x)
\end{equation}
where $S^d:\Gamma_N \mto \reals_{\geq 0}^p$ is the dual solution map for the unregularized problem \eqref{eq:DualP}, which may be multi-valued, $S^p_\eps$ is the primal solution mapping associated with the regularized problem \eqref{eq:DualP_reg} such that $\kappa_\eps(x) = \Xi S^p_\eps(x)$, and $\mu > 0$ is the strong convexity constant of $f$ in \eqref{eq:centralized-opt} with respect to $u$. We proceed by finding a Lipschitz like bound for $S^d(x)$. All elements $(u,\lambda) \in (S^p(x),S^d(x))$ satisfy the KKT conditions
\begin{equation}
    \begin{bmatrix}
        H & E^T & C^T\\	
        E & 0 & 0 \\
        C & 0 & 0
    \end{bmatrix} \begin{bmatrix}
        u\\ \lambda \\ \nu
    \end{bmatrix} + \begin{bmatrix}
        Gx \\ Fx-b \\ Dx - c
    \end{bmatrix} + \begin{bmatrix}
        0 \\ \mc{N}_+(\lambda) \\ \mc{N}_+(\nu)
    \end{bmatrix} \ni 0
\end{equation}
of the fully dualized problem where $\nu$ are dual variables associated with the local constraints \eqref{eq:local-constraints-condensed}, where $\mc{N}_+$ denotes the normal cone of the nonnegative orthant with appropriate dimension. This is an affine variational inequality of the form $\tilde Q z + \tilde R x + \tilde b + \mc{N}_C(z) \ni 0$ for matrices $\tilde Q,\tilde R, \tilde b$ and a polyhedral set $C = \reals^{N n_u} \times \reals_{\geq 0}^l$ and thus the solution mapping $\tilde S(x) = (Q + \mc{N}_C)^{-1}(-b-\tilde Rx)$ is polyhedral \cite[Ex.~3D.2]{Dontchev2009ImplicitMappings} and therefore outer Lipschitz continuous, i.e., there exists $L_1 > 0$ such that for all $\bar x \in \Gamma_N$
\begin{equation}
    \tilde S(x) \subseteq \tilde S(\bar x) + L_1 \|x - \bar x\| \mc{B}
\end{equation}
where $\mc{B}$ is the unit ball. By Assumption~\ref{ass:plant-mpc-properties}, we know that at the equilibrium point $x = 0$, $\kappa(x) = 0$ and the state and input pair $(x,\kappa(x))$ must lie in the interior of the constraints. Thus all constraints are inactive at $0$, $\tilde S(0) = \{0\}$ and $\tilde S(x) \subseteq L_1 \|x\| \mc{B}$. This implies that $\|z\| \leq L_1 \|x\|~\forall z\in \tilde S(x)$ and, in particular, that $\|\lambda\| \leq L_1 \|x\|~\forall \lambda \in S^d(x)$. Combining this with \eqref{eq:dual_bound} we obtain 
\begin{align*}
    \|S^p(x) - S^p_\eps(x)\|^2 &\leq \frac{\eps}{2\mu }\|\lambda\|^2 - \frac{\eps}{2\mu }\|S^d_\eps(x)\|^2~~\forall \lambda \in S^d(x)\\
    &\leq \frac{L_1^2}{2\mu }\|x\|^2.
\end{align*}
Since $\kappa_\eps = \Xi S^p_\eps$, taking the square root of both sides and letting $L_\kappa = \|\Xi\| L_1/\sqrt{2\mu}$ completes the proof. \qed
\end{proof}

Finally, we can show $\phi$ is a LISS Lyapunov function. Define $f_w(x,\tilde d) = Ax + B \kappa(x) + Bw(x) + \tilde d$ where $\tilde d = d + B \delta u$ is the combined physical and suboptimality disturbance. Then, using Lemma~\ref{lmm:value-function} and the triangle inequality we have that for all $x\in \Gamma_N$
\begin{align*}
\phi(f_w(x,\tilde d)) &\leq \phi(f_w(x,0)) + |\phi(f_w(x,\tilde d)) - \phi(f_0(x,0))|\\
&\leq \beta \phi(x) + a_2 \|f_w(x,\tilde d) - f_0(x,0)\|\\
& = \beta \phi(x) + a_2 \|w(x) + d + B\delta u\|\\
& \leq \beta \phi(x) + a_2 \|w(x)\| + a_2 \|B\delta u\| + a_2 \|d\|\\
&\leq \left (\beta + L_\kappa \frac{a_2}{a_1} \sqrt{\eps}\right ) \phi(x) + a_2 \|B\delta u\| + a_2 \|d\|,
\end{align*}
and thus since $x_{t+1} = f_w(x_t,\tilde d_t)$
\begin{equation}
    \phi(x_{t+1}) \leq \tilde \beta(\eps) \phi(x_t) + a_2 \|B\| \|\delta u_t\| + a_2 \|d_t\|.
\end{equation}
Note that since $\beta \in (0,1)$, $\tilde \beta(\eps) \in (0,1)$ for $\epsilon < \tilde \eps = \left(\frac{1-\beta}{L_\kappa}\frac{a_1}{a_2}\right)^2$ and thus for small enough $\eps$
we obtain that, following \cite[Example 3.4]{Jiang2001Input-to-stateSystems}, 
\begin{equation*}
    \phi(x_t) \leq \tilde \beta(\eps)^t \phi(x_0) + \frac{a_2 \|B\|}{1- \beta(\eps)} \|\delta u\|_\infty + \frac{a_2}{1- \beta(\eps)} \|d\|_\infty
\end{equation*}
provided $x_t\in \Gamma_N$. Combined with Lemma~\ref{lmm:value-function}.(ii) yields
\begin{equation}\label{eq:ISS-eq}
    \|x_t\| \leq \beta_1(\|x_0\|,t) + \gamma_1~\|\delta u\|_\infty+ \sigma_1~\|d\|_\infty
\end{equation}
with $\beta_1(s,t) = s\tilde \beta(\eps)^ta_2/a_1$, $\sigma_1 = \frac{a_2}{a_1} \frac{1}{1-\tilde \beta(\eps)}$, and $\gamma_1 = \|B\| \sigma_1$. 

The bound \eqref{eq:ISS-eq} requires that $x_t\in \Gamma_N$ for all time, we next demonstrate that if $\|x_0\| \leq \rho/3, \|\delta u\|_\infty \leq \rho/(3\gamma_1)$ and $\|d\|_\infty \leq \rho/(3\sigma_1)$ then $x_t\in \Gamma_N$ for all $t$. Note that by Assumption~\ref{ass:plant-mpc-properties}, $0 \in \mathrm{interior}~\Gamma_N$ and thus by Lemma~\ref{lmm:value-function}.(ii) there exists $\rho > 0$ such that $\mc{D} = \{x~|~\|x\| \leq \rho\} \subset \Gamma_N$. Then we proceed by induction. Clearly $\|x_0\| \leq \rho/2 \leq \rho$; assuming \eqref{eq:ISS-eq} holds up to $t-1$ we have
\begin{align*}
    \|x_t\| &\leq \beta_1(\|x_0\|,t) + \gamma_1~\|\delta u\|_\infty+ \sigma_1~\|d\|_\infty\\
    &\leq \max\{3\|x_0\|, 3\gamma_1~\|\delta u\|_\infty, 3\sigma_1~\|d\|_\infty \}\\
    &= \max\{3 \rho /3, 3\gamma_1\rho / (3\gamma_1), 3\sigma_1  / (3\sigma_1)\rho\}\\
    & = \rho,
\end{align*}
where we have used the identity $a + b + c \leq \max\{3a,3b,3c\}$ for all $a,b,c\geq 0$, and thus $x_t \in \mc{D} \subset \Gamma_N$ as claimed. Note that constraints may be active during transients. \qed

\section{Proof of Theorem~\ref{thm:algo-iss}} \label{app:algo-iss}
We begin by establishing Lipschitz continuity of the regularized primal and dual solution mappings
\begin{lem} \label{lmm:solution-regularity}
The primal and dual solution maps $S^p_\eps$ and $S^d_\eps$ of the regularized problem \eqref{eq:DualP_reg} are $L_p^\eps$ and $L_d^\eps$ Lipschitz continuous on $\Gamma_N$.
\end{lem}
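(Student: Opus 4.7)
The plan is to recast the regularized problem as a jointly strongly convex multi-parametric quadratic program (mpQP) and then invoke standard mpQP sensitivity results to deduce Lipschitz continuity of both $S^p_\eps$ and $S^d_\eps$. Uniqueness of both mappings is already in place: by Lemma~\ref{lmm:dual-props}, $\psi_\eps(\cdot,x)$ is $\eps$-strongly convex, and $f(\cdot,x)$ is strongly convex in $u$ since $H \succ 0$.

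The first substantive step is to identify, via Moreau--Yosida conjugacy, the primal problem whose Fenchel--Rockafellar dual is \eqref{eq:DualP_reg}. Adding $\eps\|\lambda\|^2$ to the dual objective amounts to replacing the indicator $\mc I_{\mc E(x)}$ in the primal by its infimal convolution with $\tfrac{1}{4\eps}\|\cdot\|^2$, which evaluates to the smooth quadratic penalty $\tfrac{1}{4\eps}\|(Eu + Fx - b)_+\|^2$. Introducing a slack $s\geq 0$ then lifts the regularized primal into the parametric QP
\begin{equation*}
    \min_{u,\,s \geq 0}~ f(u,x) + \tfrac{1}{4\eps}\|s\|^2 \quad \text{s.t.}~ Dx + Cu \leq c,~ Eu + Fx - b \leq s,
\end{equation*}
which is jointly strongly convex in $(u,s)$ with data depending on $x$ only through affine right-hand sides.

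From here the conclusion is almost immediate: by \cite[Thm.~4]{Bemporad2002TheSystems}, the optimizer of a strongly convex mpQP is a continuous piecewise-affine function of the parameter on its (polyhedral) feasibility region, and hence globally Lipschitz on $\Gamma_N$. This directly yields Lipschitz continuity of the $u$-component $S^p_\eps$ together with the optimal slack $s^\star(x)$. Finally, the stationarity condition in $s$ in the lifted problem gives the algebraic identity $\lambda = s/(2\eps)$ for the coupling multiplier, and by strong duality this multiplier coincides with $S^d_\eps(x)$; therefore Lipschitzness of $s^\star$ transfers directly to $S^d_\eps$.

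The main obstacle I anticipate is the Moreau--Yosida duality step: carefully justifying the equivalence between \eqref{eq:DualP_reg} and the lifted primal QP, and verifying that the coupling multiplier of that lifted QP really coincides with $S^d_\eps(x)$. Once this identification is nailed down, the rest of the argument reduces to a direct application of off-the-shelf mpQP sensitivity theory plus a one-line KKT computation to pass Lipschitzness from $s^\star$ to $\lambda^\star$.
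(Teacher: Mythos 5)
Your proposal is correct and takes essentially the same route as the paper: the paper also lifts the regularized dual into a soft-constrained, strongly convex multi-parametric QP in primal-plus-slack variables (its QP \eqref{eq:pd-reg-qp}, with the identification $y=\eps\lambda$ obtained through the KKT/variational-inequality conditions rather than through your Moreau--Yosida/Fenchel-conjugacy argument) and then invokes \cite[Theorem 4]{Bemporad2002TheSystems} to conclude Lipschitz continuity of both $S^p_\eps$ and $S^d_\eps$. The remaining discrepancies are cosmetic: a factor-of-two in the slack penalty stemming from the $\eps\|\lambda\|^2$ versus $\tfrac{\eps}{2}\|\lambda\|^2$ convention, and your redundant constraint $s\geq 0$, whose multiplier is easily eliminated by complementarity so that $\lambda = s/(2\eps)$ still holds.
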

\begin{proof}
The primal-dual solution map $(S^p_\eps,S^d_\eps)$ solves the following strongly monotone variational inequality
\begin{equation} \label{eq:VI-condensed}
    \begin{bmatrix}
        H & E^T \\
        E & \eps I
    \end{bmatrix} \begin{bmatrix}
        u \\ \lambda
    \end{bmatrix} + \begin{bmatrix}
        G x\\
        Fx -b
    \end{bmatrix} + \begin{bmatrix}
        \mc{N}_{\mc{Z}}(u,x)\\
        \mc{N}_{+}(\lambda)
    \end{bmatrix} \ni 0,
\end{equation}
where $\mc{Z} = \bigtimes_{i\in \mc M} \mc{Z}^i$ are the collected local constraints. The variational inequality \eqref{eq:VI-condensed} is necessary and sufficient for optimality of the following multi-parametric QP 
\begin{subequations} \label{eq:pd-reg-qp}
\begin{alignat}{2}
&\min_{u,y}~~&& \frac12  \begin{bmatrix} 
u\\ x \\ y
\end{bmatrix}^T \begin{bmatrix}
    H & G^T & 0\\
    G & W & 0\\
    0 & 0 & \eps^{-1}I
\end{bmatrix} \begin{bmatrix} 
u\\ x \\ y
\end{bmatrix}\\
&~\mathrm{s.t.} && Cu + Dx \leq c\\
& &&Eu + Fx \leq b + y
\end{alignat}
\end{subequations}
as the KKT necessary conditions of \eqref{eq:pd-reg-qp} reduce to \eqref{eq:VI-condensed}, and thus $(S^p_\eps,S^d_\eps)$ is Lipschitz continuous by \cite[Theorem 4]{Bemporad2002TheSystems}. \qed
\end{proof}
By using Theorem~\ref{thm:algo-conv}, we have that
\begin{equation}
    \psi_\eps(S^d_\eps(x),x) - \psi_\eps(T^\ell(\lambda,x),x) \leq \frac{2\alpha^{-1}}{(\ell + 1)^2} \|\lambda - S^d_\eps(x)\|^2,
\end{equation}
while strong concavity of the dual (Lemma~\ref{lmm:dual-props}) yields
\begin{equation}
    \frac{\eps}{2} \|\lambda - S^d_\eps(x)\|^2 \leq \psi_\eps(S^d_\eps(x),x) - \psi_\eps(\lambda,x).
\end{equation}
Combining these two inequalities, we obtain
\begin{equation}
    \|T^\ell(\lambda,x) - S^d_\eps(x)\| \leq \eta(\ell) \|\lambda - S^d_\eps(x)\|,
\end{equation}
where $\eta(\ell)^2 = \frac{4}{\alpha \eps} \left(\frac{1}{\ell+1}\right)^2$. Then we have that
\begin{align*}
\|e_{t+1}\| &= \|\lam_{t+1} - S^d_\eps(x_{t+1})\| \leq \eta(\ell) \|\lam_t - S^d_\eps(x_{t+1})\|\\
& \leq\eta(\ell) \|\lam_t - S^d_\eps(x_t)\| + \eta(\ell)\|S^d_\eps(x_t) - S^d_\eps(x_{t+1})\|\\
&\leq \eta(\ell) \|e_t\| + \eta(\ell) L_d^\eps \|\Delta x_t\|.
\end{align*}
Again following \cite[Example 3.4]{Jiang2001Input-to-stateSystems}, for $\ell > \bar \ell$ it hold that $\eta(\ell) < 1$ which implies that
\begin{equation}
    \|e_t\| \leq \eta(\ell)^t \|e_0\| + \gamma_2(\ell)~\|\Delta x\|_\infty
\end{equation}
where $\gamma_2(\ell) = \frac{\eta(\ell) \alpha^{-1}}{1 - \eta(\ell)}$ as claimed. \qed

\section{Proof of Theorem~\ref{thm:main-theorem}} \label{app:sys-iss}
We begin by noting by Theorem~\ref{thm:plant-iss}, \eqref{eq:error1} is LISS and thus thanks to \cite[Lemma 3.8]{Jiang2001Input-to-stateSystems} we have the asymptotic bound ($\overline{\lim}$ denotes the limit supremum)
\begin{align*}
    \limsupt \|x_t\| &\leq \gamma_1 \limsupt \|\delta u_t\| + \gamma_1 \limsupt\|d_t\|\\
    &\leq \gamma_1 \limsupt \|q(\lam_t,x_t) - q(S^d_\eps(x_t),x_t)\| + \gamma_1 \limsupt\|d_t\|\\
    &\leq L_q \gamma_1 \limsupt \|e_t\| + \gamma_1 \limsupt\|d_t\|,
\end{align*}
where we used that $q(\cdot,x)$ is $L_q$-Lipschitz uniformly in $x$ by strong convexity of $f^i(\cdot,x)$. Moreover by Theorem~\ref{thm:algo-iss}, \eqref{eq:error2} is ISS and thus again using \cite[Lemma 3.8]{Jiang2001Input-to-stateSystems}
\begin{align}
\limsupt \|e_t\| &\leq \gamma_2(\ell) \limsupt\|\Delta x_t\|\\
&\leq \gamma_2(\ell) \left (\limsupt \|x_t\| + \limsupt \|x_{t+1}\|\right)\\
&\leq 2\gamma_2(\ell) \limsupt \|x_t\|.
\end{align}
Together these yield that
\begin{equation}
    \limsupt \|x_t\| \leq 2 L_q \gamma_1 \gamma_2(\ell) \limsupt\|x_t\| + \gamma_1 \limsupt\|d_t\|,
\end{equation}
and thus by the small-gain theorem \cite[Theorem 1]{Jiang2004NonlinearApplications} the interconnection is LISS (and thus respects the LISS restriction $\{x_t\} \subseteq \Gamma_N$ imposed by the plant subsystem) with respect to $d$ if $2 L_q \gamma_1 \gamma_2(\ell) < 1$. Since $\gamma_2(\ell) = \sqrt{\frac{2}{\alpha \eps}} \left(\frac{1}{\ell+1}\right) \to 0$ as $\ell \to \infty$ there exists $\ell^* \in (0,\infty)$ such that $2 L_q \gamma_1 \gamma_2(\ell^*) = 1$ and the small-gain condition is satisfied for all $\ell > \ell^*$. \qed

 %%%%%%%%%%%%%%%%%%%%%%%%% BILBIO
\balance       
\bibliography{references} % in the appendices.

\end{document}